\documentclass[letter,12pt,onecolumn,draftclsnofoot]{IEEEtran}

\IEEEoverridecommandlockouts
\usepackage[pdftex]{graphicx}
 \usepackage{lipsum,graphicx,subcaption}
\usepackage{float}
\usepackage{amsmath,amssymb,amsfonts}
\usepackage{textcomp}
\usepackage{url}

\usepackage[T1]{fontenc}
\usepackage{mathtools, cuted}
\usepackage{lipsum, color}
\usepackage{amsthm}
\newtheorem{theorem}{Theorem}

\newtheorem{lemma}{Lemma}

\newtheorem{remark}{Remark}
\usepackage{mathtools}
\usepackage{graphics}
\DeclarePairedDelimiter\norm{\lVert}{\rVert}
\usepackage{algpseudocode}
\usepackage{float}
\usepackage{verbatim}
\usepackage{array}
\usepackage{cite}
\usepackage{lipsum}
\usepackage{mathtools}
\usepackage{icomma}
\usepackage{optidef}
\usepackage{subcaption}
\usepackage[font=small]{caption}
\usepackage{mwe}
\usepackage{textgreek}
\usepackage{tabularx}
\usepackage{multirow}
\usepackage{stackengine}
\newcommand{\overbar}[1]{\mkern 1.5mu\overline{\mkern-1.5mu#1\mkern-1.5mu}\mkern 1.5mu}
\usepackage{subcaption}
\captionsetup[table]{font={stretch=1.2}} 
\newcommand{\widesim}[2][1.5]{
  \mathrel{\overset{#2}{\scalebox{#1}[1]{$\sim$}}}
}

\ifCLASSINFOpdf
\else
\fi

\interdisplaylinepenalty=2500

\begin{document}

\title{\textcolor{black}{Impact of Beam Misalignment on Hybrid Beamforming NOMA for mmWave Communications}}

\author{
    \IEEEauthorblockN{Mojtaba~Ahmadi~Almasi,   Mojtaba~Vaezi,   Hani~Mehrpouyan}\\


\thanks{This project was supported in part by the NSF ERAS under Grant  1642865. This work was presented in part at VTC Fall 2018~\cite{almasi2018non}.

M. A. Almasi and H. Mehrpouyan are with the Department
of Electrical and Computer Engineering,   Boise State University,   Boise,   ID 83725,   USA (e-mail: mojtabaahmadialm@u.boisestate.edu,  hanimehrpouyan@boisestate.edu). 

M. Vaezi is with the Department of Electrical and Computer Engineering,   Villanova University,   Villanova,   PA 19085,   USA. He is also a Visiting Research Collaborator
 at Princeton University (e-mail: mvaezi@villanova.edu).
}
}

\maketitle


\begin{abstract}
This paper analyzes the effect of beam misalignment on rate performance in downlink of hybrid beamforming-based non-orthogonal multiple access (HB-NOMA) systems. First an HB-NOMA framework is designed in multiuser  millimeter wave (mmWave) communications. A sum-rate maximization problem is formulated for HB-NOMA,  and an algorithm is introduced to design digital and analog precoders and efficient power allocation. Then, regarding perfectly aligned line-of-sight (LoS) channels, a lower bound for the achievable rate is derived. Next, when the users experience misaligned LoS or  non-LoS (NLoS) channels, the impact of beam misalignment is evaluated. To this end, a misalignment factor is modeled and each misaligned effective channel is described in terms of the perfectly aligned effective channel parameters and the misalignment factor. Further, a lower bound for the achievable rate is extracted. We then derive an upper bound for the rate gap expression between the aligned and misaligned HB-NOMA systems. The analyses reveal that a large misalignment can remarkably degrade the rate. Extensive numerical simulations are conducted to verify the findings. 
\end{abstract}

\begin{IEEEkeywords}
Millimeter wave, hybrid beamforming, NOMA, beam misalignment, achievable rate. 
\end{IEEEkeywords}
\newpage
\section{Introduction}
\textit{Millimeter wave} (mmWave) communications has emerged as one of the key solutions for the fifth-generation (5G) wireless networks. The existence of  large unused spectrum at mmWave band (30-300 GHz) offers the potential for
significant throughput gains. Shorter wavelengths of the mmWave band, on the other hand, allow for the deployment of  large numbers of antenna elements at both the  base station (BS) and  mobile users, which, in turn, enables
mmWave systems to  support higher degrees
of  \textit{multiplexing} gain in the multiple-input multiple-output (MIMO)
and \textit{multiuser MIMO} systems~\cite{r1,  rappaport2014millimeter,  osseiran2014scenarios, 7593259}.
To this end,   the BS needs to  apply some form of \textit{beamforming}.
This beamforming can be done in the baseband,   radio frequency (RF),   or a combination of the two.
While  \textit{baseband beamforming} (fully-digital) offers  a better
control over the entries of the precoding matrix, 
it is unlikely with current semiconductor technologies due to high hardware cost and power consumption. \textit{Analog beamforming} is an alternative to the baseband beamforming which controls the phase of the signal transmitted at each antenna
using analog phase-shifters  implemented
in the RF domain. Fully-analog beamforming which uses one \textit{RF chain},   see,   e.g.~\cite{r5},   can,   however,   support only one data stream. 

In order to transmit multiple streams and keep the hardware complexity and energy consumption low,   by exploiting several RF chains,    \textit{hybrid analog/digital beamforming} mmWave systems are designed~\cite{el2012low,  el2014spatially}. In~\cite{r9} and~\cite{sayeed2011continuous},   the concept of beamspace MIMO is introduced where several RF chains are connected to a lens antenna array via switches. Recently,   multi-beam lens-based reconfigurable antenna MIMO systems have  been proposed to overcome severe path loss and shadowing in mmWave frequencies~\cite{r19,  almasi2018new}. In the aforementioned systems,   each beam is considered to serve only one user. The works  in~\cite{alkhateeb2015achievable} and~\cite{alkhateeb2015limited} show that exploiting hybrid beamforming in multiuser systems achieves a higher spectral efficiency. Also,  ~\cite{almasi2018reconfigurable} enhances the spectral efficiency by supporting several users through multi-beam reconfigurable antenna. Nevertheless,   the number of served users are far less than the number of users envisioned for 5G networks.

\textit{Non-orthogonal multiple access} (NOMA) is another enabling technique for 5G networks that augments the number of users and spectral efficiency in multiuser scenarios~\cite{saito2013system,  saito2013non,  ding2014performance,  higuchi2015non,  dai2015non,  ding2016impact,  shin2017non,  shin2017coordinated}.
Unlike   orthogonal multiple access (OMA) techniques,   such as
time division multiple access (TDMA),   frequency division multiple access (FDMA),   and code division multiple access (CDMA) which can support only one user per time,   frequency,   or code,   respectively,   NOMA can support multiple users in the same time/frequency/code/beam. NOMA can be realized in the code,   power,   or other domains~\cite{MojtabaBook}. In the power domain,   NOMA employs \textit{superposition coding} at the transmitter. This technique exploits the channel gain difference between users to multiplex their signal. Subsequently,   \textit{successive interference cancellation} (SIC) is applied at the receiver such that the user with better channel first decodes the signal of the user with worse channel and  then subtracts it from the
received signal to decode its own signal~\cite{tse2005fundamentals,  saito2013system,  saito2013non,  ding2014performance,  higuchi2015non,  dai2015non,  ding2016impact,  shin2017non,  shin2017coordinated,  MojtabaBook}.  Beside superiority of NOMA over OMA techniques in terms of number of supported users and spectral efficiency, OMA techniques may not be a practical option for mmWave communications~\cite{wei2018multi}. As an example, TDMA, which serves users through orthogonal time slots but the same spectrum, requires precise and fast timing synchronization. This is because symbol rate in 5G network is far higher than the current networks. Therefore, employing TDMA in mmWave 5G network might be challenging. Exploiting FDMA in mmWave 5G network also can bring about implementation issues. In FDMA, the existing large frequency band is divided into several orthogonal frequency bands. It is expected that FDMA to serve all users via the orthogonal bands at the same time slot. However, due to highly directional beams, the current mmWave systems are not able to cover all users' locations and only a few users will be supported. Further, frequency band division causes the allocated bandwidth for each user in a dense mmWave network to become small. So,  mmWave networks may not have enough bandwidth to support the users with the required high data-rate.  The obstacles related to using CDMA in mmWave frequencies have been explained in~\cite{wei2018multi}. The propagation characteristics of mmWave frequencies are another reason to incorporate the hybrid beamforming systems and NOMA. Transmission in mmWave band  suffers from high path loss and thus users in different locations may experience very different channel gains.
This implies that mmWave band  better suits  power domain NOMA which offers a larger spectral efficiency when  the channel gain difference between the users is high. Severe shadowing and blockage are other factors that make mmWave links vulnerable to outage~\cite{r1,rappaport2014millimeter,7593259}. Although the large unused spectrum in mmWave bands is envisioned a promising solution for high data-rate transmission in 5G networks, high path loss and outage due to shadowing and blockage make mmWave links prone to temporary shutdowns. Hence, when the link exists, increasing the spectral efficiency will lead to higher data-rate. This would meet the required unprecedented  throughput  of 1000$\times$ current networks in 5G networks.

Integration of NOMA into mmWave systems,   which allows multiple users to share the same beam or the same RF chain,   has been  received considerable research interests~\cite{ding2017random,  ding2017noma,  wang2017spectrum,  hao2017energy,  wei2018multi, xiao2018joint,  wu2017non, 8493528}. In~\cite{ding2017random}, a random beamforming technique is designed for mmWave NOMA systems where  the BS randomly radiates a directional beam toward paired users. In~\cite{ding2017noma},   it is shown that mismatch between the users' channel vector and finite resolution analog beamforming\footnote{Finite resolution analog beamforming is due to the use of a finite number of phase-shifters in the analog beamformer.} simplifies utilizing NOMA in MIMO mmWave systems. In~\cite{wang2017spectrum}, a combination of beamspace MIMO and NOMA is proposed to
ensure that  the number of served users is not limited to the number of RF chains.  In~\cite{hao2017energy},   NOMA is studied for hybrid mmWave MIMO systems,   where a power allocation algorithm has been provided in order to maximize energy efficiency. In all aforementioned works,   NOMA is combined with mmWave systems assuming only baseband precoders/combiners. The works in~\cite{wei2018multi,xiao2018joint,wu2017non,8493528} have recently studied  NOMA in hybrid beamforming systems. Ref.~\cite{wei2018multi} proposes a beam splitting NOMA scheme for hybrid beamforming mmWave systems. In order to increase the spectral efficiency, some users are served with a common RF chain but the grated beams. This technique is only proper when the angle of the directional beams serving the users is large enough. Also, beam grating divides the power of a strong mmWave beam. Hence, far users cannot capture the required power. In~\cite{xiao2018joint}, designing beamforming vectors and allocating power for just two users have been studied. In~\cite{wu2017non},  it is demonstrated that due to the utilization of HB,  the digital precoder of the BS is not perfectly aligned with the user's effective channel. Then, a power allocation algorithm that maximizes the sum-rate has been proposed. Only two users in each beam is considered; moreover, the work fails to study the effect of analog beamforming on the rate performance. Newly, Zhou $et~al.$ have proposed an angle-based user pairing strategy~\cite{8493528}. The strategy repeatedly switches between NOMA and OMA techniques. Such that, when beamwidth of mainlobe of BS is not smaller than the angle difference between two users, they are considered as NOMA users. Otherwise, they are treated as OMA users. Then, the coverage probability and the sum-rate are evaluated. Regularly switching between NOMA and OMA will add more hardware complexity to the system. Also, as it is mentioned, OMA techniques may not be a practical choice for mmWave systems. \textit{In mmWave systems, due to the directional nature of beams in mmWave systems, beam misalignment between the BS and users is inevitable~\cite{6835179}}. Most of the reviewed works  consider neither the effect of phase-shifters employed in the analog beamformer of a HB system nor the effect of beam misalignment.  

In this paper, we investigate the impact of exploiting NOMA in multiuser HB systems termed HB-NOMA. At the outset, it is supposed that HB-NOMA users are paired with respect to their locations and effective channels which is widely adopted by recent research works~\cite{ding2017random,  ding2017noma,  wang2017spectrum,  hao2017energy,  xiao2018joint,  wu2017non, wei2018multi, 8493528}. The achievable rate is evaluated when the BS and users' beam are aligned and misaligned. Essentially, the perfect beam alignment is attributed to the existence of LoS channel aligned in the same direction between the BS and users which allows the users to steer their beam directly toward the BS. The imperfect beam alignment (misalignment) occurs due to practical phenomena such as misaligned LoS channels and NLoS channels which are caused by shadowing and blockage. To the best of authors' knowledge, this paper is the first research work that studies the effect of integration of hybrid beamforming and NOMA on the achievable rate in the presence of beam alignment and misalignment. The contribution of this paper is summarized as follows.
\begin{enumerate}
    \item We incorporate the 5G enabling technology NOMA  and a multiuser HB system studied in~\cite{alkhateeb2015limited}. Since we aim to evaluate the impact of beam misalignment on the downlink of HB-NOMA systems, a sum-rate expression is formulated. Specifically, we revise the sum-rate expression in~\cite{alkhateeb2015limited} with regard to the NOMA technique. Then, an algorithm is introduced to maximize the system sum-rate subject to a total power constraint, in  three steps. To get the first and second steps, we design the analog and digital precoders only regarding LoS channels using the well-known strong effective channel-based effective channel precoder. The third step is a location-based static power allocation.
    \item As the maximized sum-rate directly depends on the effective channels of users, we first study the rate for perfect beam alignment where all users exploit LoS channels. A lower bound is derived for the achievable rate of an HB-NOMA user. The bound reveals that the interference is just due to using NOMA in which SC technique at transmitter and SIC at the receiver are exploited. That is to say, the interference on a user is caused by NOMA users located inside the same cluster called intra-cluster interference. Indeed, HB slightly amplifies the noise term which is led by analog devices used in the beamformer. The analysis shows that for the perfect alignment, the HB-NOMA users can achieve a rate which is close to that of NOMA with the fully-digital beamforming systems.
    \item We study the achievable rate of the maximized sum-rate for misaligned beams between the BS and users in the presence of misaligned LoS and NLoS channels. Toward this goal, the beam misalignment problem is modeled by a beam misalignment factor. Considering the derived factor, the effective channel of the users with misaligned LoS or NLoS channel is described in terms of the aligned effective channel parameter and the misalignment factor.
    \item We extract a lower bound for the achievable rate using the effective channel model. Three terms, i.e., intra-cluster interference, inter-cluster interference,   and noise,   constrain the achievable rate. Unfortunately,   these terms are directly or indirectly associated with misalignment factors. It is concluded that in HB-NOMA with the precoder based on the strongest effective channel the achievable rate of a user depends on both the effective channel gain and beam alignment issue. This is opposite to the fully-digital NOMA systems in which only the effective channel gain affects the rate. Then, an upper bound for rate gap between the aligned and misaligned HB-NOMA user is found.   
\end{enumerate}
To confirm the analyses and the derived expressions,  numerical simulations are done. Different HB-NOMA system parameters are evaluated. The simulations indicate that the HB-NOMA outperforms OMA.

The paper is organized as follows: Section~\ref{sec:system} presents the
system model of HB-NOMA and formulates a sum-rate expression. In Section~\ref{sec:problem},  we maximize the sum-rate for perfect beam alignment then analyze the rate performance. Section~\ref{sec:lower} studies the rate performance for beam misaligned HB-NOMA.
In Section~\ref{sec:simulation},   we present simulation results investigating the rate performance of HB-NOMA. Section~\ref{sec:conclusion} concludes the paper.

\textbf{Notations:} Hereafter,   $j = \sqrt{-1}$,   small letters,   bold letters and bold capital letters will designate scalars,   vectors,   and matrices,   respectively. Superscripts $(\cdot)^{T}$, $(\cdot)^*$ and $(\cdot)^{\dagger}$ denote the transpose, conjugate  and transpose-conjugate operators,   respectively. Further,   $|\cdot|$,   $\norm[]{\cdot}$, and $\norm[]{\cdot}_2$ denote the absolute value, norm-$1$ of $(\cdot)$, and  norm-$2$ of vector $(\cdot)$, respectively. Indeed,   $\norm[]{\cdot}_F$ denotes the Frobenius norm of matrix $(\cdot)$. Finally,   $\mathbb{E}[\cdot]$ denotes the expected value of $(\cdot)$.

\section{System Model and Rate Formulation}\label{sec:system}
\subsection{System Model for HB-NOMA}
We assume a narrow band mmWave downlink system composed of a BS and multiple users as shown in Fig.~\ref{fig:system}. The BS is equipped with  $N_\text{RF}$ chains and $N_\text{BS}$ antennas whereas each user has one RF chain and $N_\text{U}$ antennas. Each RF chain is connected to the antennas through phase-shifters. We also assume that the BS communicates with each user via only one stream. This will be justified later in the present section. In traditional multiuser systems based on the hybrid beamforming the maximum number of users that can be simultaneously served by the BS equals the number of BS RF chains~\cite{alkhateeb2015limited}.
\begin{figure*}
\vspace*{-0.7cm}
\hspace*{-0.8cm}
    \centering
    \includegraphics[scale=1.1]{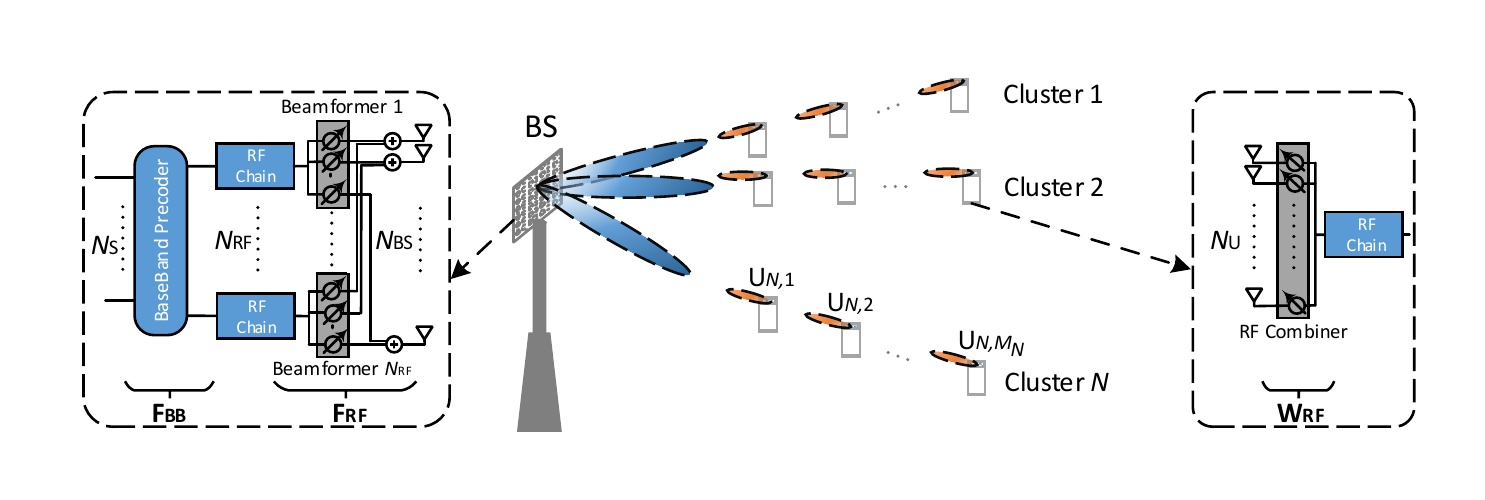}
    \vspace*{-0.5cm}
    \caption{HB-NOMA with one BS and huge number of users grouped into $N$ clusters each with $M_n$ NOMA users.
    $N_\text{S},   N_\text{RF},   N_\text{BS},  $ and $ N_\text{U}$ are the numbers of multiplexed streams,   RF chains,   BS antennas,   and user antennas,   respectively.}
    \label{fig:system}
\end{figure*}

In order to establish a better connectivity in dense areas and further improve the sum-rate, this paper develops HB-NOMA system. The system is practical and takes the parameters of the promising hybrid beamforming into account. To achieve this, we utilize NOMA in hybrid beamforming multiuser systems where  each beam is allowed to serve more than one user. The transmitter simultaneously sends $N_\text{S}$ streams toward $\sum_{n=1}^NM_n$ users which are grouped into $N\leq N_\text{RF}$ clusters. $M_n$ denotes the number of users in the $n$th cluster. The users in each cluster can be scheduled by using the efficient approaches presented in~\cite{6328212,8239622}.  Without loss of generality,   we assume $N_\text{S} = N$. Hence,   $\sum_{n=1}^NM_n \gg N_\text{RF}$; i.e., an HB-NOMA system can simultaneously serve $\sum_{n=1}^NM_n$ users which is much larger than the number of RF chains. In the following we formulate the transmit and received signals for the HB-NOMA system. 

\subsubsection{Superposition coding}
On the downlink of the HB-NOMA system, first, the transmit symbols are superposition coded at the BS. Let $\mathbf{s} = [s_1,   s_2,   \dots,   s_N]^{T}$ denote the information signal vector such that $\mathbb{E}\left[s_ns_n^*\right] = \frac{1}{N}$. Each $s_{n} = \sum_{m = 1}^{M_n}\sqrt{P_{n,  m}}s_{n,  m}$ is the superposition coded signal performed by NOMA with $P_{n,  m}$ and $s_{n,  m}$ being transmit power and transmit information signal for the $m$th user in the $n$th cluster.  Then, the hybrid beamforming is done in two stages. In the first stage, the transmitter applies an $N\times N$ baseband precoder $\mathbf{F}_\text{BB}$ using its $N_\text{RF}$ RF chains. This stage then is followed by an $N_\text{BS} \times N$ RF precoder $\mathbf{F}_\text{RF}$ using analog phase-shifters. Thus, the transmit signal vector after superposition coding is given by
\begin{equation} \label{eq1}
    [x_{1}, x_{2}, \dots, x_N]^T = \mathbf{F}_\text{RF}\mathbf{F}_\text{BB}[s_{1}, s_{2}, \dots, s_N]^T, 
\end{equation}
where $x_n$ denotes the transmit signal toward the $n$th cluster. Hereafter,   U$_{n,  m}$ denotes the $m$th user in the $n$th cluster. Since $\mathbf{F}_\text{RF}$ is implemented by using analog phase-shifters it is assumed that all elements of $\mathbf{F}_\text{RF}$ have an equal norm,   i.e.,   $|\left(\mathbf{F}_\text{RF}\right)_{n,  m}|^2  = N_\text{BS}^{-1}$. Also,   the total power of the hybrid transmitter is limited to $\norm[\big]{\mathbf{F}_\text{RF}\mathbf{F}_\text{BB}}^2_F = N$~\cite{el2014spatially,alkhateeb2015limited}. 

\subsubsection{Successive interference cancellation}
The received signal at  U$_{n,  m}$ is given by
\begin{equation}\label{eq2}
    \mathbf{r}_{n,  m} = \mathbf{H}_{n,  m}\mathbf{F}_\text{RF}\mathbf{F}_\text{BB}\mathbf{s} + \mathbf{n}_{n,  m}, 
\end{equation}
where $\mathbf{H}_{n,  m}$ of size $N_\text{U}\times N_\text{BS}$ denotes the mmWave channel between the BS and U$_{n,  m}$ such that $\mathbb{E}\left[\norm[\big]{\mathbf{H}_{n,  m}}^2_F\right] = N_\text{BS}N_\text{U}$. $\mathbf{n}_{n,  m}\widesim{} \mathcal{CN}(\mathbf{0},  \sigma^2\mathbf{I})$ is the additive white Gaussian noise vector  of size $N_\text{U}\times 1$. Each component of $\mathbf{n}_{n, m}$ has zero-mean and $\sigma^2$ variance. $\mathbf{I}$ denotes the identity matrix of size $N_\text{U}\times N_\text{U}$.
At  U$_{n,  m}$,   the RF combiner  is used to process the received vector as
\begin{align}\label{eq3}
    y_{n,  m} &=  \underbrace{\mathbf{w}_{n,  m}^\dagger\mathbf{H}_{n,  m}\mathbf{F}_\text{RF}\mathbf{f}^n_\text{BB}\sqrt{P_{n,  m}}s_{n,  m}}_{\text{desired signal}}  +\underbrace{\mathbf{w}_{n,  m}^\dagger\mathbf{H}_{n,  m}\mathbf{F}_\text{RF}\mathbf{f}^n_\text{BB}\sum_{k=1, k \neq m}^M\sqrt{P_{n,  k}}s_{n,  k}}_{\text{intra-cluster interference}}  \nonumber \\
    & \ \ + \underbrace{\mathbf{w}_{n,  m}^\dagger\mathbf{H}_{n,  m}\sum_{\ell=1, \ell \neq n }^N\mathbf{F}_\text{RF}\mathbf{f}^\ell_\text{BB}\sum_{q=1}^M\sqrt{P_{\ell,  q}}s_{\ell,  q}}_{\text{inter-cluster interference}} + \underbrace{\mathbf{w}_{n,  m}^\dagger\mathbf{n}_{n,  m}}_{\text{noise}}, 
\end{align}
where $\mathbf{w}_{n,  m} \in \mathbb{C}^{N_\text{U}\times 1}$ denotes the  combiner at U$_{n,  m}$. After combining,   each user decodes the intended signal by using SIC as follows. The first user of each cluster, which has the highest channel gain, is allocated the lowest power and the $M_n$th user, which has the lowest channel gain, is allocated the highest power. At the receiver side,  U$_{n, m}$  decodes the intended signal of U$_{n,  k^\prime}$,  i.e.,  $s_{n, k^\prime}$,  for $k^\prime=m+1,  m+2,  \dots,  M_n$ and subtracts it from the received signal $y_{n, m}$. However,  NOMA treats the intended signal of U$_{n, k}$ for $k=1,  2,  \dots,  m-1$ as intra-cluster interference. In this paper, SIC process is assumed to be ideal. When SIC is non-ideal, the user cannot completely remove the signals of some of U$_{n,k^\prime}$ for $k^\prime = m_1, m+2, \dots, M_n$ which degrades the performance of the system~\cite{vaezi2018non}. The effect of non-ideal SIC on NOMA has recently been studied in~\cite{8125754}. The effect of non-ideal SIC on HB-NOMA will be evaluated in the authors' future work. To this end, the BS should send the order of superposition coding to all users in the cluster. Usually NOMA users are selected to have very different channel gains, specially in mmWave frequencies in which path loss is higher that sub-6 GHz frequencies. So, the order of decoding can be estimated from the user's distance to the BS or its channel gain, correspondingly. We note that the order of encoding is  related to the channel gain as indicated in Section~\ref{algorithm}. 

\subsubsection{Channel model} In mmWave communications, the extended Saleh-Valenzuela model as a multi-path channel (MPC) model has been widely adopted for hybrid beamforming systems~\cite{el2014spatially,alkhateeb2015achievable,alkhateeb2015limited}. In this model, each LoS and NLoS path is described by a channel gain and array steering/response vector at the transmitter/receiver. Here, the number of paths between the BS and U$_{n,m}$ is defined by $A_{n,m}$. The channel matrix is given by  
\begin{equation}\label{eq4}
    \mathbf{H}_{n,  m} = \sqrt{\frac{N_\text{BS}N_\text{U}}{A_{n,m}}} \sum_{\alpha=1}^{A_{n,m}}\beta_{n,  m, \alpha}\mathbf{a}_\text{U}(\vartheta_{n,  m, \alpha}^\text{Az},\vartheta_{n,  m, \alpha}^\text{El})\mathbf{a}_\text{BS}^\dagger(\varphi_{n, m, \alpha}^\text{Az},\varphi_{n, m, \alpha}^\text{El}), 
\end{equation}
where $\beta_{n,  m, \alpha} = g_{n,  m, \alpha}d_{n,  m, \alpha}^{\frac{-\nu}{2}}$ with $g_{n,  m, \alpha}$ is the complex gain with zero-mean and unit-variance for the $\alpha$th MPC,   $d_{n,  m, \alpha}$ is the distance between the BS and U$_{n,  m, \alpha}$,  and $\nu$ is the path loss factor. $\vartheta_{n,  m, \alpha}^\text{Az}$ ($\vartheta_{n,  m, \alpha}^\text{El}$) and $\varphi_{n, m, \alpha}^\text{Az}$ ($\varphi_{n, m, \alpha}^\text{El}$) are normalized azimuth (elevation) angle of arrival (AoA) and angle of departure (AoD), respectively. Also,    $\mathbf{a}_\text{BS}$ and $\mathbf{a}_\text{U}$ are the antenna array steering/response vector of the BS/U$_{n,  m}$. In mmWave outdoor communications, to further reduce the interference, sectorized BSs are likely employed~\cite{5783993}. Mostly, each sector in azimuth domain is much wider than elevation domain~\cite{5783993}.  Reasonably, we assume that the BS separates the clusters in azimuth domain and considers fixed elevation angles. 
Hence, the BS implements only azimuth beamforming and neglects elevation beamforming. In this case, the antenna configuration is a uniform linear array (ULA) and the superscript El is dropped. For a ULA, the steering vector is defined as
\begin{equation}\label{eq5}
\mathbf{a}_{\text{BS}}(\varphi_{n,  m, \alpha}) = \frac{1}{\sqrt{N_\text{BS}}} \left[1,   e^{-j\pi\varphi_{n,  m, \alpha}},  \dots,   e^{-j\pi(N_\text{BS}-1)\varphi_{n,  m, \alpha}}\right]^T. 
\end{equation}
where $\varphi_{n,  m, \alpha} \in [-1,  1]$  is related to the AoD  $\phi \in [-\frac{\pi}{2},  \frac{\pi}{2}]$ as  $\varphi_{n,  m, \alpha} = \frac{2D\text{sin}(\phi)}{\lambda}$~\cite{el2014spatially,  alkhateeb2015limited}. Note that $D$ denotes the antenna spacing and $\lambda$ denotes the wavelength of the propagation. The antenna array response vector for $\mathbf{a}_\text{U}(\vartheta_{n,  m, \alpha})$ can be written in a similar fashion. 

It is mentioned that transmission at mmWave systems is done through directional beams. Since the BS is equipped with HB system, the beamforming can be conducted as follows. When both LoS and NLoS components are available, because LoS component is stronger than NLoS it is reasonable to steer the beam toward LoS component. When only NLoS channels are available, the beam would be steered toward the strongest NLoS component. Thus, only one stream is sent for each cluster. This will also lead to low hardware cost and power consumption due to using one RF chain per stream. Therefore, with a single path component, i.e., $A_{n,m}=1$, the MPC model described in~(\ref{eq4}) is converted to a single path channel given by 
\begin{equation}\label{equ1}
    \mathbf{H}_{n,m} = \sqrt{{N_\text{BS}N_\text{U}}} \beta_{n,  m}\mathbf{a}_\text{U}(\vartheta_{n,  m})\mathbf{a}_\text{BS}^\dagger(\varphi_{n, m}).
\end{equation}
\subsection{Rate Formulation}
In~(\ref{eq3}), after applying superposition coding at the transmitter,   each user experiences two types of interference. Intra-cluster interference which is due to other users within the cluster and inter-cluster interference which is due to users within other clusters. Suppressing the intra-cluster interference directly depends on efficient power allocation and deploying SIC which is discussed in the previous section. To mitigate the inter-cluster interference,   the transmitter needs to design a proper beamforming matrix which will be discussed in Section~\ref{algorithm}. Hence, the rate for  U$_{n,  m}$ is expressed as
\begin{equation}\label{eq6}
    R_{n,  m} = \text{log}_2\left(1 + \frac{P_{n,  m}\left|\mathbf{w}_{n,  m}^\dagger\mathbf{H}_{n,  m}\mathbf{F}_\text{RF}\mathbf{f}_\text{BB}^n\right|^2}{I_\text{intra}^{n,  m} + I_\text{inter}^{n,  m} + \sigma^2}\right), 
\end{equation}
where $I_\text{intra}^{n,  m}$ is given by
\begin{equation}\label{eq61}
I_\text{intra}^{n,  m} =
\sum_{k=1}^{m-1}P_{n,  k}\left|\mathbf{w}_{n,  m}^\dagger\mathbf{H}_{n,  m}\mathbf{F}_\text{RF}\mathbf{f}_\text{BB}^n\right|^2, 
\end{equation}
denotes the intra-cluster. Also,   $I_\text{inter}^{n,  m}$ is defined as
\begin{equation}\label{eq62}
I_\text{inter}^{n,  m} =\displaystyle  \sum_{\ell=1, \ell\neq n}^{N}\sum_{q=1}^{M_n}P_{\ell,  q}\left|\mathbf{w}_{n,  m}^\dagger\mathbf{H}_{n,  m}\mathbf{F}_\text{RF}\mathbf{f}_\text{BB}^\ell\right|^2,
\end{equation}
denotes the inter-cluster interference.

\section{Perfect Beam Alignment: Rate Maximization and Analysis}\label{sec:problem}
\subsection{The Maximization Algorithm}\label{algorithm}
To optimize the sum-rate performance,   hybrid precoder ${\mathbf{F}}_\text{RF}$,   and ${\mathbf{F}}_\text{BB}$,   combiner ${\mathbf{w}}_{n,  m}$ and transmit power ${P}_{n,  m}$ for $m = 1,   2,   \dots,   M_n$ and $n = 1,   2,   \dots,   N$ should be found from
\begin{maxi!}
{\mathbf{F}_\text{RF}, \mathbf{F}_\text{BB}, \mathbf{w}_{n,m}, P_{n,m}}{\sum_{n=1}^N\sum_{m=1}^{M_n} R_{n,  m} \label{eq:objectiveopt}}{\label{eq:opt}}{}
\addConstraint{\left|\left(\mathbf{F}_\text{RF}\right)_{n,  m}\right|^2}{= N_\text{BS}^{-1}\label{b}}
\addConstraint{\norm[\big]{\mathbf{F}_\text{RF}\mathbf{F}_\text{BB}}^2_F}{= N \label{c}}    
\addConstraint{\left|\mathbf{w}_{n,  m}\right|^2}{= N_\text{U}^{-1} \label{d}} 
\addConstraint{\sum_{n=1}^N\sum_{m=1}^{M_n} P_{n,  m}}{\leq P \label{e}}
\addConstraint{P_{n,m}}{>0, \label{f}}
\end{maxi!}
where $P$ equals to the total transmit power. In the above optimization problem, the constraints~(\ref{b}) and (\ref{d}) ensure that all elements of $\mathbf{F}_\text{RF}$ and $\mathbf{w}_n$ have an equal norm. Further, the constraint~(\ref{c}) ensures that the total power of the hybrid transmitter is limited to $N$. The constraint~(\ref{e}) guarantees that the total transmit power is limited to $P$. Finally, (\ref{f}) ensures that the allocated power to U$_{n,m}$ is greater than zero. One would add fairness constrain to the maximization problem.  Ref.~\cite{8125754} discusses a viable solution in this case. In particular, a weighted sum-rate which considers a special priority for each user is utilized. Also, to ensure that all the users achieve a predefined minimum rate $R_\text{min}$, another constrain can be included in the problem~(\ref{eq:opt}) such that $R_{n,m} \ge R_\text{min}$. In this case, an iterative algorithm that properly allocates the power is required~\cite{zhang2016robust}. Without loss of generality, here, we assume that all the users satisfy $R_{n, m}\ge R_\text{min}$.      

It is mentioned that transmission in mmWave bands happens through LoS and NLoS channels. In particular, the users which are located far from the BS will mostly be supported via NLoS channels~\cite{7593259}.  Let first focus on only LoS channels. We assume that all channels are LoS and the effective channels are perfectly aligned as shown in Fig.~\ref{fig:system}. By perfect alignment we mean that $\mathbf{a}_\text{BS}(\varphi_{n,  m})$ is identical for all users in the $n$th cluster,   i.e.,    $\mathbf{a}_\text{BS}(\varphi_{n,  1}) = \mathbf{a}_\text{BS}(\varphi_{n,  2}) = \dots = \mathbf{a}_\text{BS}(\varphi_{n,  M_n})$ for $n = 1,   2,  \dots,  N$.

In general, there are two extreme cases to design baseband precoder for mmWave-NOMA systems, strong effective channel-based and singular value decomposition (SVD)-based precoder methods~\cite{wang2017spectrum}. The strong effective channel-based is designed for only LoS channels and the SVD-based precoder is designed for only NLoS channels. Further, to the best of authors' knowledge, it is not shown how to design the SVD-based RF precoder for hybrid beamforming system. Here, in order to understand the behavior of beam misalignment in HB-NOMA systems we choose the strong effective channel-based precoder which is widely used in the literature~\cite{wang2017spectrum,hao2017energy,wu2017non}. 

The maximization problem in~(\ref{eq:opt}) is non-convex and finding the optimal solution is not trivial.
To ease, we present an efficient and simple algorithm in three steps as described below.

In the first step,   the BS and U$_{n,  m}$ solve the following problem
\begin{align}\label{eq7}
    \underset{\mathbf{w}_{n,  m},  \mathbf{f}_\text{RF}^{n,  m}}{\text{maximize}} \quad \left|\mathbf{w}_{n,  m}^\dagger\mathbf{H}_{n,  m}\mathbf{f}_\text{RF}^{n,  m}\right| \qquad
    \text{subject to (\ref{b}) and (\ref{d})}.
\end{align}
Since the channel $\mathbf{H}_{n,  m}$ has only one path,   and given the continuous beamsteering capability assumption,   in view of \eqref{eq4}, $\mathbf{w}_{n,  m}=\mathbf{a}_\text{U}(\vartheta_{n, m})$ and ${\mathbf{f}}_\text{RF}^{n,  m} = \mathbf{a}_\text{BS}(\varphi_{n,  m}),$ are  the optimal solutions~\cite{alkhateeb2015limited}. We design the RF (analog) and baseband (digital) precoders using the adopted strong effective channel-based method. Hence, in order to design the RF precoder, the BS selects the first user of each cluster. The RF precoder of the first user of the $n$th cluster makes the $n$th column of  the RF precoding matrix,   i.e.,    ${\mathbf{f}}_\text{RF}^{n,  1}$, gives the RF precoding matrix as
 \begin{equation}\label{eq81}
 \mathbf{F}_\text{RF} = \left[{\mathbf{f}}^{1,  1}_\text{RF},   {\mathbf{f}}^{2,  1}_\text{RF},  \dots,  {\mathbf{f}}^{N,  1}_\text{RF}\right].
 \end{equation}
The first user is determined based on the locations of the user as follows:
\begin{equation}\label{eq8}
    \left|\beta_{n,  1}\right| \geq \left|\beta_{n,  2}\right| \geq \dots \geq \left|\beta_{n,  M_n}\right|,   \quad \text{for} \quad n = 1,   2,   \dots,   N, 
\end{equation}
where $\beta_{n,  m}$ is the gain factor defined in~(\ref{eq4}). To determine the first user, the BS does not need to know the channel gain of the users. Recall that the channel gain $\beta_{n,m}$, defined in~(\ref{eq4}), mainly depends on distance between the BS and U$_{n,m}$ ($d$) and path loss factor ($\nu$). Since the path loss factor is identical for all users, the first user of each cluster can be determined as the closest user to the BS such that its channel gain has the highest amplitude among the users in the same cluster. While the purpose of ordering in~(\ref{eq8}) is to define the first user, to realize NOMA, another ordering method based on the effective channel gain is presented in the third step. It should be stressed that the main reason to design the digital precoder with respect to the strongest channel is that the strongest user must decode the other users' signal before its signal. So, the power of this user's signal is not affected by other clusters' signal. More details will be provided in Section~\ref{sec:lower}.

In the second step,   the effective channel for  U$_{n,  m}$ is expressed as
\begin{align}\label{eq9}
    \overbar{\mathbf{h}}_{n,  m}^\dagger &=  \mathbf{w}_{n,  m}^\dagger\mathbf{H}_{n,  m}\mathbf{F}_\text{RF}= \sqrt{N_\text{BS}N_\text{U}}\beta_{n,  m}\mathbf{a}_\text{BS}^\dagger(\varphi_{n,  m})\mathbf{F}_\text{RF}.
\end{align}
Regarding the strongest channel-based method,  we write the effective channel matrix as
\begin{equation}\label{eq91}
\overbar{\mathbf{H}} = \left[\overbar{\mathbf{h}}_{1,  1},   \overbar{\mathbf{h}}_{2,  1},   \dots,   \overbar{\mathbf{h}}_{N,  1} \right]^\dagger, 
\end{equation}
where $\overbar{\mathbf{h}}_{n,  1}$ denotes the effective channel vector of U$_{n,  1}$.

Designing a proper digital precoder $\mathbf{F}_\text{BB}$ can reduce the inter-cluster interference. In brief,   designing the baseband precoder becomes equivalent to solving
\begin{equation}\label{eq10}
    \underset{\{\mathbf{f}_\text{BB}^\ell\}_{\ell\neq n}}{\text{minimize}} \ I_\text{inter}^{n,  m} \qquad \text{subject to (\ref{c})}.
\end{equation}
where $I_\text{inter}^{n,  m}$ is defined in~(\ref{eq61}). We notice that so far we have designed the analog beamformer and combiner. The only unknown parameter is the digital beamformer.  In this paper, we adopt zero-forcing beamforming (ZFBF) which makes a balance between implementation complexity and performance~\cite{spencer2004zero ,yoo2006optimality}. Based on ZFBF, the solution for (\ref{eq10}) is obtained as~\cite{alkhateeb2015limited}
\begin{equation}\label{eq11}
    \mathbf{F}_\text{BB} = \overbar{\mathbf{H}}^\dagger\left(\overbar{\mathbf{H}}\overbar{\mathbf{H}}^\dagger\right)^{-1}\bf{\Gamma}, 
\end{equation}
where the diagonal elements of $\mathbf{\Gamma}$ are given by~\cite{alkhateeb2015limited}
\begin{equation}\label{eq12}
    \mathbf{\Gamma}_{n,  n} = \sqrt{\frac{N_\text{BS}N_\text{U}}{\left(\mathbf{F}^{-1}\right)_{n,  n}}}\left|\beta_{n,  1}\right|,   \quad \text{for} \quad n = 1,   2,   \dots,   N.
\end{equation}
where $\mathbf{F}=\mathbf{F}_\text{RF}^\dagger\mathbf{F}_\text{RF}$. The determined precoder in~(\ref{eq11}) indicates that inter-cluster interference on first users is zero,   i.e.,   $\overbar{\mathbf{h}}^\dagger_{n,  1}\mathbf{f}^\ell_\text{BB} = 0$ for $n = 1,   2,   \dots,   N$ and $\ell \neq n$. That is,   inter-cluster interference is perfectly eliminated on the first users. This completes our justification about the orienting the beams toward the first users and choosing their effective channel vector in designing $\mathbf{F}_\text{BB}$.

In the third step,   the BS first reorders the users then allocates the power. The reordering process is done based on the effective channel vectors as
\begin{equation}\label{eq121}
    \norm[\big]{\overbar{\mathbf{h}}_{n,  1}} \geq \norm[\big]{\overbar{\mathbf{h}}_{n,  2}}\geq \dots
    \geq \norm[\big]{\overbar{\mathbf{h}}_{n,  M_n}},  \quad \text{for} \quad n = 1,   2,   \dots,   N.
\end{equation}
Notice that in (\ref{eq8}) we aimed to find the first users based on the large-scale gain. However,   in HB-NOMA the power allocation is conducted based on order of the effective channel gains. It is not irrational to assume that the BS knows the effective channels. This can be done through the channel quality indicator (CQI) messages~\cite{chen2017exploiting}. Each user feeds the effective channel back to the BS then it sorts the users.

The optimal power allocation in~(\ref{eq:opt}) can be done by solving the following problem.
\begin{align}\label{eq13}
    \underset{P_{n,  m}}{\text{maximize}} \ \sum_{n=1}^{N}\sum_{m=1}^{M_n} R_{n,  m}  \qquad \text{subject to (\ref{e}) and (\ref{f})}.
\end{align}
To solve the problem,  we propose a two-stage solution. First the BS divides the power between the clusters considering their users' channel gain as follows.
\begin{equation}\label{equPowerAllocation}
    P_n = \frac{\displaystyle\sum_{m=1}^{M_n}\norm[\big]{\overbar{\mathbf{h}}_{n,  m}}^2}{\displaystyle\sum_{n=1}^N\sum_{m=1}^{M_n}\norm[\big]{\overbar{\mathbf{h}}_{n, m}}^2}P, \quad \text{for} \quad n=1, 2, \dots, N.
\end{equation}
Then a fixed power allocation is utilized for the users in each cluster respecting the constraint $\sum_{m=1}^{M_n} P_{n,  m} = P_n$. To determine $P_{n,m}$, one solution is to allocate a certain amount of power for each U$_{n,m}$ except the first one that only satisfies $R_{n,m}=R_\text{min}$, then the remaining is assigned to U$_{n,1}$. This power allocation process is in consist with the concept of NOMA in which, to achieve higher sum-rate, the stronger user should receive more power~\cite{saito2013system,  saito2013non,  ding2014performance,  higuchi2015non}. On the other hand, recall that mmWave channels are vulnerable to blockage and shadowing. Especially, for the weak users which are located far from the BS, this issue becomes worse. So, the weak users may not be able to achieve the required minimum rate. Another solution is to give priority to the fairness issue. To this, we need to allocate less power to the strong users and more power to the weak users. It turns out, fairness works against achieving maximum rate. Thus, our solution to achieve maximum rate and compensate for the mmWave propagation issues is to assign the same amount of power for all the users, i.e.,
\begin{equation}
    P_{n,1} = P_{n,2} = \cdots = P_{n,M_n}.
\end{equation}

\subsection{The Achievable Rate Analysis}
In this section,  the achievable rate of U$_{n,m}$ is evaluated with respect to the designed parameters. We derive a lower bound which characterizes insightful results on the achievable rate of HB-NOMA.

\begin{theorem}\label{theo:1}
\normalfont
With perfect beam alignment, a lower bound on  the achievable rate of  U$_{n,  m}$ is given by
\begin{equation}\label{eq14}
    \overbar{R}_{n,  m} \geq \text{log}_2\left(1 + \frac{P_{n,  m}N_\text{BS}N_\text{U}\left|\beta_{n,  m}\right|^2}{\displaystyle\sum_{k=1}^{m-1}P_{n,  k}N_\text{BS}N_\text{U}\left|\beta_{n,  m}\right|^2 + \sigma^2 \kappa_\text{min}^{-1}(\mathbf{F})}\right), 
\end{equation}
$\kappa_\text{min}(\mathbf{F})$ denotes the minimum eigenvalue of $\mathbf{F}$.
\end{theorem}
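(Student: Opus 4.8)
The plan is to compute the exact rate in~(\ref{eq6}) under the perfect-alignment hypothesis and then relax only the noise term to a single eigenvalue bound. First I would exploit the alignment condition $\mathbf{a}_\text{BS}(\varphi_{n,1}) = \mathbf{a}_\text{BS}(\varphi_{n,2}) = \dots = \mathbf{a}_\text{BS}(\varphi_{n,M_n})$: substituting it into the effective channel~(\ref{eq9}) shows that every user in cluster $n$ shares the first user's beam direction, so that $\overbar{\mathbf{h}}_{n,m}^\dagger = (\beta_{n,m}/\beta_{n,1})\,\overbar{\mathbf{h}}_{n,1}^\dagger$ for all $m$. This single observation is what makes the analysis tractable, because the zero-forcing digital precoder~(\ref{eq11}) was designed only with respect to the first users' effective channels.

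Next I would invoke the defining property of the ZFBF precoder, namely $\overbar{\mathbf{H}}\mathbf{F}_\text{BB} = \mathbf{\Gamma}$ with $\mathbf{\Gamma}$ diagonal, which reads $\overbar{\mathbf{h}}_{n,1}^\dagger\mathbf{f}_\text{BB}^\ell = \Gamma_{n,n}\,\delta_{n,\ell}$ row-by-row. Combined with the proportionality from the previous step, this gives $\overbar{\mathbf{h}}_{n,m}^\dagger\mathbf{f}_\text{BB}^\ell = 0$ for every $\ell\neq n$, so the inter-cluster interference~(\ref{eq62}) vanishes identically, $I_\text{inter}^{n,m}=0$. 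The desired-signal and intra-cluster terms then share the common factor $|\overbar{\mathbf{h}}_{n,m}^\dagger\mathbf{f}_\text{BB}^n|^2 = (|\beta_{n,m}|^2/|\beta_{n,1}|^2)\,|\Gamma_{n,n}|^2$, and inserting the closed form~(\ref{eq12}) for $\Gamma_{n,n}$ cancels the $|\beta_{n,1}|^2$ factors to leave $N_\text{BS}N_\text{U}|\beta_{n,m}|^2/(\mathbf{F}^{-1})_{n,n}$. Clearing this common factor from the SINR yields an \emph{exact} rate whose only residual matrix dependence is a noise term $\sigma^2(\mathbf{F}^{-1})_{n,n}$.

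The final and only inequality step is to upper-bound the diagonal entry $(\mathbf{F}^{-1})_{n,n} = \mathbf{e}_n^\dagger\mathbf{F}^{-1}\mathbf{e}_n$, where $\mathbf{e}_n$ is the $n$th standard basis vector. Since $\mathbf{F}=\mathbf{F}_\text{RF}^\dagger\mathbf{F}_\text{RF}$ is Hermitian positive definite, the eigenvalues of $\mathbf{F}^{-1}$ are the reciprocals of those of $\mathbf{F}$, so its largest eigenvalue is $\kappa_\text{min}^{-1}(\mathbf{F})$; the Rayleigh-quotient bound then gives $\mathbf{e}_n^\dagger\mathbf{F}^{-1}\mathbf{e}_n \le \kappa_\text{min}^{-1}(\mathbf{F})$ because $\mathbf{e}_n$ has unit norm. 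As this quantity appears in the SINR denominator with the positive weight $\sigma^2$, replacing $(\mathbf{F}^{-1})_{n,n}$ by the larger $\kappa_\text{min}^{-1}(\mathbf{F})$ only decreases the logarithm's argument, which establishes~(\ref{eq14}).

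I expect the main obstacle to be conceptual rather than computational: recognizing that perfect alignment promotes the per-first-user zero-forcing guarantee into a fully interference-free condition for every user in the cluster, thereby eliminating $I_\text{inter}^{n,m}$. Once that is in place the algebra is routine, and the eigenvalue inequality reduces to a one-line Rayleigh-quotient argument.
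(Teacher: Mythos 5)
Your proposal is correct and follows essentially the same route as the paper's own proof: the proportionality $\overbar{\mathbf{h}}_{n,m}^\dagger = (\beta_{n,m}/\beta_{n,1})\,\overbar{\mathbf{h}}_{n,1}^\dagger$ under perfect alignment, the zero-forcing identity $\overbar{\mathbf{h}}_{n,1}^\dagger\mathbf{f}_\text{BB}^\ell = \boldsymbol{\Gamma}_{n,n}\delta_{n,\ell}$ killing the inter-cluster term, substitution of the closed form of $\boldsymbol{\Gamma}_{n,n}$ to expose $\sigma^2(\mathbf{F}^{-1})_{n,n}$ as the only matrix-dependent quantity, and finally the bound $(\mathbf{F}^{-1})_{n,n}\le \kappa_\text{max}(\mathbf{F}^{-1}) = \kappa_\text{min}^{-1}(\mathbf{F})$. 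Your explicit Rayleigh-quotient justification of that last diagonal-entry inequality is a slightly more detailed rendering of the same step the paper invokes from positive definiteness of $\mathbf{F}$.
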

\begin{proof}
Please see Appendix~\ref{app:Theorem1}.
\end{proof}

\begin{remark}\label{remark:1}
\normalfont
Theorem 1 indicates that when the alignment between the users in each cluster is perfect,   still two terms degrade the sum-rate performance of every HB-NOMA user. The first term $\sum_{k = 1}^{m-1}P_{n,  k}N_\text{BS}N_\text{U}\left|\beta_{n,  m}\right|^2 $ is due to using NOMA scheme which leads to inevitable intra-cluster interference. The second term $\kappa_\text{min}^{-1}(\mathbf{F})$ is due to realizing the beamforming with digital and analog components,   i.e.,   hybrid beamforming instead of fully-digital components.  It is worth mentioning that in the fully-digital beamforming the first term exists but the second term is always one. Therefore,   even under perfect beam alignment assumption the hybrid beamforming intrinsically imposes small loss on the achievable rate.
\end{remark}
\section{Beam Misalignment: Modeling, Rate Analysis, and Rate Gap}\label{sec:lower}
\begin{figure}
     \centering
     \includegraphics[scale=1.3]{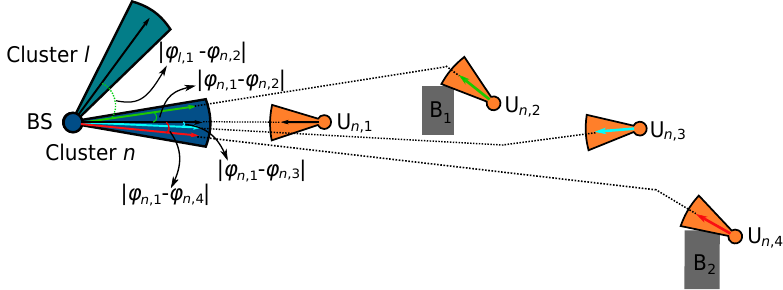}
     \caption{Beam misalignment in mmWave communications due to the NLoS channels. The NLoS channels are caused by blockages B1 and B2.}
     \label{fig:misalignment}
 \end{figure}
In the previous section we designed the precoders when only LoS channels exist and the users are perfectly aligned. The precoders are found based on the strongest effective channel. Perfect alignment is an ideal assumption. In fact, AoDs/AoAs are random variable and with almost surely the probability of occurring different AoDs/AoAs even in LoS channels is one which leads to $\mathbf{a}_\text{BS}(\varphi_{n,  1}) \neq \mathbf{a}_\text{BS}(\varphi_{n,  2}) \neq \dots \neq \mathbf{a}_\text{BS}(\varphi_{n,  M_n})$ for $n = 1,   2,  \dots,  N$. On the other hand, recall that in mmWave frequencies, due to shadowing and blockage, NLoS channels are inevitable~\cite{7593259}. These channels force the users to indirectly steer their beam toward the BS as illustrated by Fig.~\ref{fig:misalignment}. So, the misalignment between the effective channel of the first user and the users with misaligned LoS and NLoS channel in each cluster causes the digital baseband precoder cannot eliminate the inter-cluster interference.  As a result, the achievable rate is degraded. In this section, first the misalignment is modeled. Second, using the derived model, a lower bound is found for the rate. Finally, an upper bound is extracted for the rate gap between the perfect alignment and misalignment.    

\begin{remark}
\normalfont
While our findings in this section are general and hold for misaligned LoS  and NLoS channels, we only concentrate on NLoS channels. Thus, by LoS channel we mean a perfectly aligned channel. Also, it is assumed that all users expect the first one in all clusters have NLoS channels.  In order to distinguish effective channel of the users with aligned LoS channels from NLoS channels, hereafter,  we denote $\overbar{\mathbf{h}}_{n, m}$ as effective channel of the user with perfect beam alignment and $\tilde{\mathbf{h}}_{n, m}$ as effective channel of the user with imperfect beam alignment. Also,  $\overbar{R}_{n, m}$ and  $\tilde{R}_{n, m}$ denote the rate of U$_{n, m}$ with LoS and NLoS channel,  respectively. 
\end{remark}

\subsection{Beam Misalignment Modeling}
In what follows,   we study the impact of imperfect beam alignment on the rate. Before that,   we calculate the norm of the effective channel defined in~(\ref{eq9}). Defining
 \begin{equation}\label{eqFejer}
 \left|\mathbf{a}_\text{BS}^\dagger(\varphi_{n,  m})\mathbf{a}_\text{BS}(\varphi_{\ell,  1})\right|^2 = K_{N_\text{BS}}(\varphi_{\ell,  1}-\varphi_{n,  m}), 
 \end{equation}
 where $K_{N_\text{BS}}$ is Fej$\acute{\text{e}}$r kernel of order $N_\text{BS}$~\cite{strichartz2000way},   we get
\begin{equation}\label{eq1601}
\norm[\big]{\tilde{\mathbf{h}}_{n,  m}}^2 = N_\text{BS}N_\text{U}\left|\beta_{n,  m}\right|^2\displaystyle \sum_{\ell = 1}^N K_{N_\text{BS}}\left(\varphi_{\ell,  1}-\varphi_{n,  m}\right).
 \end{equation}
Now, we model the correlation between the effective channels for U$_{n,  m}$ and U$_{n,  1}$ and between U$_{n,  m}$ and U$_{\ell,  1}$ with $\ell\neq n$ by defining them as intra-cluster misalignment factor and inter-cluster misalignment factor, respectively. Notice that we consider the worst scenario. That is, U$_{n,  m}$ for $m=2, 3, \dots, M_n$ receives the signal through NLoS channel, while only U$_{n,  1}$ for $n=1, 2, \dots, N$ receives through LoS channel. Assuming LoS channel for the first users is reasonable, since in mmWave communications the users close to the BS experience LoS channels with high probability~\cite{7593259}. 
\begin{lemma}\label{lemma:2}
\normalfont
The misalignment effective channel of U$_{n,  m}$  and U$_{n,  1}$ can be modeled as
 \begin{equation} \label{eq19}
    \hat{\tilde{\mathbf{h}}}_{n,  m} = \rho_{n,  m}\hat{\tilde{\mathbf{h}}}_{n,  1} + \sqrt{1 - \rho_{n,  m}^2}\hat{\mathbf{g}}^{-n}_\text{BS}, 
\end{equation}
where $\hat{\tilde{\mathbf{h}}}_{n,  m}$ denotes the normalized imperfect effective channel, $\rho_{n,m}$ denotes the misalignment factor obtained as   
\begin{equation} \label{eqrho19}
    \rho_{n,m} =
    \frac{\displaystyle\sum_{i=1}^N\kappa_i(\mathbf{F})\left|\mathbf{a}_\text{BS}^\dagger(\varphi_{n,  m})\mathbf{v}_1^i\mathbf{v}_1^{i\dagger}\mathbf{a}_\text{BS}(\varphi_{n, 1})\right|}{\sqrt{\displaystyle\sum_{\ell = 1}^N K_{N_\text{BS}}\left(\varphi_{\ell,  1}-\varphi_{n,m}\right)}\sqrt{\displaystyle\sum_{\ell = 1}^N K_{N_\text{BS}}\left(\varphi_{\ell,  1}-\varphi_{n,1}\right)}}, 
\end{equation}
where  $\kappa_i(\mathbf{F})$ is the $i$th eigenvalue of $\mathbf{F}$.  $\hat{\mathbf{g}}^{-n}_\text{BS}$ is a normalized vector located in the subspace generated by linear combination of ${{\mathbf{a}}}_\text{BS}(\varphi_{\ell,1})$ for $\ell \neq n$, such that $\hat{\mathbf{g}}^{-n}_\text{BS}=\frac{\mathbf{g}^{-n}_\text{BS}}{\norm[\big]{\mathbf{g}^{-n}_\text{BS}}},$ where $\mathbf{g}^{-n}_\text{BS} = \sqrt{N_\text{BS}N_\text{U}}\mathbf{F}_\text{RF}^\dagger\sum_{\ell=1,\ell\neq n}^N\beta_{\ell,1}\mathbf{a}_\text{BS}(\varphi_{\ell,1}).
$
\end{lemma}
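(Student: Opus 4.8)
The plan is to establish~(\ref{eq19})--(\ref{eqrho19}) as a two-part statement: first compute the scalar correlation $\rho_{n,m}$ between the normalized effective channels, and then split the misaligned channel into a component along $\hat{\tilde{\mathbf{h}}}_{n,1}$ and an orthogonal residual that is forced into the span of the remaining first-user directions. I would begin from the effective-channel expression in~(\ref{eq9}), which in the misaligned case reads $\tilde{\mathbf{h}}_{n,m}=\sqrt{N_\text{BS}N_\text{U}}\,\beta_{n,m}^{*}\mathbf{F}_\text{RF}^{\dagger}\mathbf{a}_\text{BS}(\varphi_{n,m})$, together with the fact that the columns of $\mathbf{F}_\text{RF}$ are exactly the first-user steering vectors $\mathbf{a}_\text{BS}(\varphi_{\ell,1})$ from Step~1, eq~(\ref{eq81}). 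Hence $\mathbf{F}_\text{RF}\mathbf{F}_\text{RF}^{\dagger}=\sum_{\ell=1}^{N}\mathbf{a}_\text{BS}(\varphi_{\ell,1})\mathbf{a}_\text{BS}^{\dagger}(\varphi_{\ell,1})$, which already reproduces the norm identity~(\ref{eq1601}) through the Fej\'er-kernel definition~(\ref{eqFejer}).

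Next I would \emph{define} $\rho_{n,m}$ as the magnitude of the normalized inner product $\hat{\tilde{\mathbf{h}}}_{n,1}^{\dagger}\hat{\tilde{\mathbf{h}}}_{n,m}$. Forming $\tilde{\mathbf{h}}_{n,1}^{\dagger}\tilde{\mathbf{h}}_{n,m}$ and dividing by $\norm[\big]{\tilde{\mathbf{h}}_{n,1}}\,\norm[\big]{\tilde{\mathbf{h}}_{n,m}}$ taken from~(\ref{eq1601}) cancels the common $\sqrt{N_\text{BS}N_\text{U}}\,|\beta_{n,1}||\beta_{n,m}|$ factors and leaves the ratio of $|\mathbf{a}_\text{BS}^{\dagger}(\varphi_{n,m})\mathbf{F}_\text{RF}\mathbf{F}_\text{RF}^{\dagger}\mathbf{a}_\text{BS}(\varphi_{n,1})|$ to the two Fej\'er-kernel sums. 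Substituting the eigendecomposition $\mathbf{F}_\text{RF}\mathbf{F}_\text{RF}^{\dagger}=\sum_{i=1}^{N}\kappa_i(\mathbf{F})\,\mathbf{v}_1^{i}\mathbf{v}_1^{i\dagger}$, whose nonzero eigenvalues coincide with those of $\mathbf{F}=\mathbf{F}_\text{RF}^{\dagger}\mathbf{F}_\text{RF}$, and applying the triangle inequality term by term---legitimate since $\kappa_i(\mathbf{F})\ge 0$ for the positive-semidefinite $\mathbf{F}_\text{RF}\mathbf{F}_\text{RF}^{\dagger}$---yields exactly the numerator $\sum_{i}\kappa_i(\mathbf{F})|\mathbf{a}_\text{BS}^{\dagger}(\varphi_{n,m})\mathbf{v}_1^{i}\mathbf{v}_1^{i\dagger}\mathbf{a}_\text{BS}(\varphi_{n,1})|$ of~(\ref{eqrho19}).

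For the decomposition itself I would write the unit vector $\hat{\tilde{\mathbf{h}}}_{n,m}$ through its Gram--Schmidt split relative to $\hat{\tilde{\mathbf{h}}}_{n,1}$: the parallel coefficient has magnitude $\rho_{n,m}$, and the Pythagorean identity fixes the orthogonal coefficient at $\sqrt{1-\rho_{n,m}^{2}}$, giving the scalar weights in~(\ref{eq19}). It then remains to identify the orthogonal residual with the unit vector $\hat{\mathbf{g}}^{-n}_\text{BS}$. The vector $\mathbf{g}^{-n}_\text{BS}=\sqrt{N_\text{BS}N_\text{U}}\mathbf{F}_\text{RF}^{\dagger}\sum_{\ell\neq n}\beta_{\ell,1}\mathbf{a}_\text{BS}(\varphi_{\ell,1})$ lies in $\operatorname{span}\{\overbar{\mathbf{h}}_{\ell,1}\}_{\ell\neq n}$, the $(N-1)$-dimensional subspace that, when the first-user channels are (nearly) orthonormal, coincides with the orthogonal complement of $\hat{\tilde{\mathbf{h}}}_{n,1}$ inside their span. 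Choosing this direction is the physically meaningful option, because it is precisely the subspace the zero-forcing precoder leaks into: $\hat{\tilde{\mathbf{h}}}_{n,1}^{\dagger}\mathbf{f}_\text{BB}^{\ell}=0$ for $\ell\neq n$ annihilates the parallel part, whereas the $\hat{\mathbf{g}}^{-n}_\text{BS}$ part survives and seeds the inter-cluster interference treated in the next subsection.

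The main obstacle is that the first-user steering vectors are not exactly mutually orthogonal, so $\hat{\mathbf{g}}^{-n}_\text{BS}$ is only approximately orthogonal to $\hat{\tilde{\mathbf{h}}}_{n,1}$ and~(\ref{eq19}) is therefore a \emph{model} rather than a strict algebraic identity; correspondingly, the term-by-term triangle inequality makes~(\ref{eqrho19}) an upper-bound surrogate for the true correlation $|\hat{\tilde{\mathbf{h}}}_{n,1}^{\dagger}\hat{\tilde{\mathbf{h}}}_{n,m}|$. I would close this gap by invoking the decay of the Fej\'er kernel $K_{N_\text{BS}}(\Delta)$ away from $\Delta=0$ for large $N_\text{BS}$, together with the angular separation of distinct clusters enforced by the BS: under these conditions $\mathbf{F}_\text{RF}\mathbf{F}_\text{RF}^{\dagger}$ tends to a diagonal form, the first-user channels become asymptotically orthonormal, and both the residual-direction identification and the triangle-inequality step tighten, so the model and the bound are asymptotically exact. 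This asymptotic-orthogonality justification, rather than any single calculation, is the delicate part of the argument.
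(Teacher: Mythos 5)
Your proposal follows essentially the same route as the paper's own proof: you define $\rho_{n,m}$ as the magnitude of the normalized inner product $\bigl|\hat{\tilde{\mathbf{h}}}_{n,1}^{\dagger}\hat{\tilde{\mathbf{h}}}_{n,m}\bigr|$ (the cosine of the Hermitian angle with the phase/pseudo-angle dropped), evaluate it through the eigendecomposition of $\mathbf{F}_\text{RF}\mathbf{F}_\text{RF}^{\dagger}$ together with the Fej\'er-kernel norm identity, and then model the residual as a single unit vector $\hat{\mathbf{g}}^{-n}_\text{BS}$ in the span of the other first users' steering directions, exactly as the paper does. If anything, you are more careful on two points where the paper is loose: the paper writes the term-by-term passage of the absolute value inside the eigenvalue sum as an equality (it is, as you note, a triangle inequality), and it justifies both this and the collapse of the $(N-1)$-dimensional subspace to one direction by appeal to simulations, whereas you supply an asymptotic-orthogonality argument based on Fej\'er-kernel decay for large $N_\text{BS}$.
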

\begin{proof}
Please see Appendix~\ref{app:lemma2}.
 \end{proof}
\subsection{Rate Analysis}
Now we are ready to find a lower bound for the achievable rate of  U$_{n,  m}$.
\begin{theorem}\label{theo:2}
\normalfont
With imperfect beam alignment,   a lower bound on the achievable rate of U$_{n,  m}$, is given by
\begin{equation}\label{eq16}
    \tilde R_{n,  m} \geq \text{log}_2\left(1 +\frac{P_{n,  m}\rho_{n,  m}^2N_\text{BS}N_\text{U}\left|\beta_{n,  m}\right|^2}{\zeta_\text{intra}^{n,  m}+ \zeta_\text{inter}^{n,  m} + \zeta_\text{noise}^{n,  m}}\right), 
\end{equation}
where $\zeta_\text{intra}^{n,  m} = \sum_{k = 1}^{m-1}P_{n,  k}\rho_{n,  m}^2N_\text{BS}N_\text{U}\left|\beta_{n,  m}\right|^2$
and $\zeta_\text{inter}^{n,  m} = \left(1-\rho_{n,m}^2\right)N_\text{BS}N_\text{U}\left|\beta_{n,  m}\right|^2\kappa_\text{max}\left(\mathbf{S}\right)
\kappa_\text{min}^{-1}(\mathbf{F}) \times K_{N_\text{BS},  1}$ in which $\kappa_\text{max}\left(\mathbf{S}\right)$ is the maximum eigenvalue of $\mathbf{S} = \mathbf{F}_\text{BB}^{-n,W}\mathbf{F}_\text{BB}^{-n,W\dagger}$, $\mathbf{F}_\text{BB}^{-n,W}$ denotes the wieghted $\mathbf{F}_\text{BB}$ after eliminating the $n$th column where the columns are scaled by $P_\ell~\forall\ell\neq n$. Also, for some $m$ we define
\begin{equation}\label{eq163}
K_{N_\text{BS},  m} = \displaystyle \sum_{\ell = 1}^N K_{N_\text{BS}}\left(\varphi_{\ell,  1}-\varphi_{n,  m}\right), 
\end{equation}
where $K_{N_\text{BS}}\left(\varphi_{\ell,  1}-\varphi_{n,  m}\right)$ denotes the Fej$\acute{\text{e}}$r kernel in~(\ref{eqFejer}). Finally, $\zeta_\text{noise}^{n,  m}$ is expressed as $\zeta_\text{noise}^{n,  m} = \sigma^2\kappa_\text{min}^{-1}(\mathbf{F})K_{N_\text{BS},  1}K^{-1}_{N_\text{BS},  m},$
where $K_{N_\text{BS},  m}$ is defined in~(\ref{eq163}).
 \end{theorem}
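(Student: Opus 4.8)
The plan is to start from the exact SINR inside the rate expression~(\ref{eq6}), rewrite every channel--precoder product through the effective channel $\tilde{\mathbf{h}}_{n,m}^\dagger = \mathbf{w}_{n,m}^\dagger\mathbf{H}_{n,m}\mathbf{F}_\text{RF}$, and then substitute the decomposition of Lemma~\ref{lemma:2}, namely $\tilde{\mathbf{h}}_{n,m} = \norm[\big]{\tilde{\mathbf{h}}_{n,m}}\bigl(\rho_{n,m}\hat{\tilde{\mathbf{h}}}_{n,1} + \sqrt{1-\rho_{n,m}^2}\,\hat{\mathbf{g}}^{-n}_\text{BS}\bigr)$, reading $\norm[\big]{\tilde{\mathbf{h}}_{n,m}}^2$ off from~(\ref{eq1601}). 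The observation that drives every term is that the zero-forcing precoder~(\ref{eq11}) satisfies $\overbar{\mathbf{h}}_{n,1}^\dagger\mathbf{f}_\text{BB}^\ell = \mathbf{\Gamma}_{n,n}\delta_{n\ell}$, while $\hat{\mathbf{g}}^{-n}_\text{BS}$ lies, by its construction in Lemma~\ref{lemma:2}, in the span of $\{\mathbf{F}_\text{RF}^\dagger\mathbf{a}_\text{BS}(\varphi_{\ell,1})\}_{\ell\neq n}$, i.e.\ of the other first users' effective channels. Hence $\hat{\mathbf{g}}^{-n\dagger}_\text{BS}\mathbf{f}_\text{BB}^n = 0$ whereas $\hat{\tilde{\mathbf{h}}}_{n,1}^\dagger\mathbf{f}_\text{BB}^\ell = 0$ for $\ell\neq n$: the misalignment component is annihilated in the desired signal but is precisely what leaks into the inter-cluster interference.

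Carrying this out, the desired amplitude reduces to $\norm[\big]{\tilde{\mathbf{h}}_{n,m}}\rho_{n,m}\hat{\tilde{\mathbf{h}}}_{n,1}^\dagger\mathbf{f}_\text{BB}^n = \norm[\big]{\tilde{\mathbf{h}}_{n,m}}\rho_{n,m}\mathbf{\Gamma}_{n,n}\norm[\big]{\overbar{\mathbf{h}}_{n,1}}^{-1}$; using~(\ref{eq12}) together with~(\ref{eq1601}) its squared modulus collapses into a common factor $C = K_{N_\text{BS},m}\bigl[(\mathbf{F}^{-1})_{n,n}K_{N_\text{BS},1}\bigr]^{-1}$ multiplying $\rho_{n,m}^2 N_\text{BS}N_\text{U}|\beta_{n,m}|^2$. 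The intra-cluster sum~(\ref{eq61}) carries the identical per-symbol factor, so it equals $C\,\zeta_\text{intra}^{n,m}$ exactly. In the inter-cluster sum~(\ref{eq62}) only the $\hat{\mathbf{g}}^{-n}_\text{BS}$ part survives; collecting the within-cluster powers via $\sum_q P_{\ell,q}=P_\ell$ turns it into the quadratic form $\hat{\mathbf{g}}^{-n\dagger}_\text{BS}\mathbf{S}\hat{\mathbf{g}}^{-n}_\text{BS}$ with $\mathbf{S}=\mathbf{F}_\text{BB}^{-n,W}\mathbf{F}_\text{BB}^{-n,W\dagger}$, which the Rayleigh--Ritz bound caps by $\kappa_\text{max}(\mathbf{S})$ since $\hat{\mathbf{g}}^{-n}_\text{BS}$ is unit norm. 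The noise term is simply $\sigma^2$ because constraint~(\ref{d}) renders $\mathbf{w}_{n,m}$ unit norm.

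The last maneuver is to divide numerator and denominator of the SINR by the common factor $C$, so the numerator matches the claimed $P_{n,m}\rho_{n,m}^2 N_\text{BS}N_\text{U}|\beta_{n,m}|^2$ and the intra term matches $\zeta_\text{intra}^{n,m}$ exactly, and then to replace the residual $(\mathbf{F}^{-1})_{n,n}$ appearing in the inter-cluster and noise terms by the upper bound $\kappa_\text{min}^{-1}(\mathbf{F})$, via the standard inequality $(\mathbf{F}^{-1})_{n,n}=\mathbf{e}_n^\dagger\mathbf{F}^{-1}\mathbf{e}_n\le\kappa_\text{max}(\mathbf{F}^{-1})=\kappa_\text{min}^{-1}(\mathbf{F})$ for the Hermitian positive-definite $\mathbf{F}$. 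These two relaxations inflate the denominator to $\zeta_\text{intra}^{n,m}+\zeta_\text{inter}^{n,m}+\zeta_\text{noise}^{n,m}$, and monotonicity of $\log_2(1+\cdot)$ then delivers~(\ref{eq16}).

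I expect the main obstacle to be bookkeeping rather than any isolated deep step: tracking the common factor $C$ consistently through all three SINR terms so that it cancels uniformly, and justifying that the power collection $\sum_q P_{\ell,q}=P_\ell$ legitimately reproduces exactly the weighted Gram matrix $\mathbf{S}$ named in the statement. I would also take care to argue explicitly that $\hat{\mathbf{g}}^{-n}_\text{BS}$ lies in the zero-forced subspace --- this orthogonality is what makes the decomposition pay off, and it is the one place where the specific form of $\mathbf{g}^{-n}_\text{BS}$ from Lemma~\ref{lemma:2} must be invoked directly.
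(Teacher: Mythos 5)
Your proposal is correct and follows essentially the same route as the paper's own proof: expand the SINR of~(\ref{eq6}) via the Lemma~\ref{lemma:2} decomposition, kill the cross-components using the zero-forcing identities $\hat{\mathbf{g}}^{-n\dagger}_\text{BS}\mathbf{f}_\text{BB}^n = 0$ and $\hat{\tilde{\mathbf{h}}}_{n,1}^\dagger\mathbf{f}_\text{BB}^\ell = 0$ for $\ell\neq n$, factor out the common quantity $K_{N_\text{BS},m}\bigl[(\mathbf{F}^{-1})_{n,n}K_{N_\text{BS},1}\bigr]^{-1}$, bound the inter-cluster leakage by $\kappa_\text{max}(\mathbf{S})$, and finish with $(\mathbf{F}^{-1})_{n,n}\le\kappa_\text{min}^{-1}(\mathbf{F})$ and monotonicity of $\log_2(1+\cdot)$. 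Your Rayleigh--Ritz argument for the $\kappa_\text{max}(\mathbf{S})$ step is just the paper's Lemma~\ref{lemma:3} (there phrased as a maximum-ratio-transmission/dominant-singular-vector argument) in equivalent form, so there is no substantive difference.
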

\begin{proof}
Please see Appendix~\ref{app:theorem2}.
\end{proof}

\begin{remark}\label{remark:2}
\normalfont
Since for U$_{n,  1}$ the factor $\rho_{n, 1}$ is one,  we have $\overbar{\mathbf{h}}_{n,  1} = \tilde{\mathbf{h}}_{n,  1}$. Thus,  Theorem~\ref{theo:1} is still valid for these users. 
\end{remark}

\begin{remark}\label{remark:3}
\normalfont
Theorem~\ref{theo:2} states that the achievable rate of each user depends on the intra-cluster and inter-cluster misalignment factors, and a weak alignment reduces the power of the effective channel of that user. Intra-cluster and inter-cluster power allocation are other parameters that affect the achievable rate as seen in~(\ref{eq16}). Further,  the bound shows that the maximum eigenvalue of the baseband precoder is important in maximizing the achievable rate. That is to say, the effective channel matrix should be designed in a way that the eigenvalues of the baseband precoder are as close as possible to each other. This is because if eigenvalues are far from each other,   the maximum eigenvalue will be large. This increases the value of $\zeta_\text{inter}^{n,  m}$ which causes less achievable rate. 
\end{remark}

To gain some insight into the effect of beam misalignment, we extract a lower bound for the rate gap when U$_{n, m}$ receives the signal via LoS and NLoS channel.      
\begin{theorem}\label{theo:3}
\normalfont
The rate gap between the perfect aligned and misaligned U$_{n,m}$ is given by
\begin{align}
    \Delta R_{n,m} &\overset{\Delta}{=}  \overbar R_{n,  m} - \tilde R_{n,  m} \nonumber \\  
    &\leq \text{log}_2\left(1 + \frac{\displaystyle \left(1-\rho_{n,  m}^2\right)\kappa_\text{max}\left(\mathbf{S}\right)+\sigma^{2}K^{-1}_{N_\text{BS},m}N_\text{BS}^{-1}N_\text{U}^{-1}\left|\beta_{n,m}\right|^{-2}}{\rho_{n,  m}^2 K^{-1}_{N_\text{BS},1}\kappa_\text{min}(\mathbf{F})\displaystyle   \sum_{k=1}^{m-1}P_{n,  k}}\right).
\end{align}
\end{theorem}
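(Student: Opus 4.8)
The plan is to bound the gap $\Delta R_{n,m}=\overbar R_{n,m}-\tilde R_{n,m}$ from above by pairing an \emph{upper} bound on the aligned rate $\overbar R_{n,m}$ with the \emph{lower} bound on the misaligned rate $\tilde R_{n,m}$ already proved in Theorem~\ref{theo:2}. Since subtracting a lower bound and adding an upper bound both act in the correct direction, their difference is a legitimate upper bound on $\Delta R_{n,m}$, and the whole argument reduces to one monotone relaxation followed by bookkeeping.

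First I would establish the upper bound on $\overbar R_{n,m}$. Under perfect alignment the ZFBF design annihilates the inter-cluster interference for \emph{every} user, not just the first: because $\mathbf a_\text{BS}(\varphi_{n,m})=\mathbf a_\text{BS}(\varphi_{n,1})$, the effective channel $\overbar{\mathbf h}_{n,m}$ is collinear with $\overbar{\mathbf h}_{n,1}$, so $\overbar{\mathbf h}_{n,m}^\dagger\mathbf f_\text{BB}^\ell=(\beta_{n,m}/\beta_{n,1})\,\overbar{\mathbf h}_{n,1}^\dagger\mathbf f_\text{BB}^\ell=0$ for $\ell\neq n$. The aligned SINR therefore equals the exact quantity derived inside the proof of Theorem~\ref{theo:1}, $\overbar\gamma_{n,m}=P_{n,m}N_\text{BS}N_\text{U}|\beta_{n,m}|^2/\big(\sum_{k=1}^{m-1}P_{n,k}N_\text{BS}N_\text{U}|\beta_{n,m}|^2+\sigma^2(\mathbf F^{-1})_{n,n}\big)$. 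Dropping the strictly positive noise term $\sigma^2(\mathbf F^{-1})_{n,n}$ enlarges the SINR and yields the interference-limited bound $\overbar R_{n,m}\le\log_2\!\big(1+P_{n,m}/\sum_{k=1}^{m-1}P_{n,k}\big)$.

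Next I would subtract the Theorem~\ref{theo:2} lower bound. Writing $C=N_\text{BS}N_\text{U}|\beta_{n,m}|^2$, $S_{m-1}=\sum_{k=1}^{m-1}P_{n,k}$, $S_m=S_{m-1}+P_{n,m}$, and $D_m=\zeta_\text{intra}^{n,m}+\zeta_\text{inter}^{n,m}+\zeta_\text{noise}^{n,m}$, the two logarithms combine, after using $\rho_{n,m}^2CP_{n,m}+D_m=\rho_{n,m}^2CS_m+\zeta_\text{inter}^{n,m}+\zeta_\text{noise}^{n,m}$, into $\Delta R_{n,m}\le\log_2\big(S_mD_m/[S_{m-1}(\rho_{n,m}^2CS_m+\zeta_\text{inter}^{n,m}+\zeta_\text{noise}^{n,m})]\big)$. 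The single relaxation that produces the stated bound is to discard the nonnegative $\zeta_\text{inter}^{n,m}+\zeta_\text{noise}^{n,m}$ inside the last factor; this shrinks the denominator and hence inflates the fraction to $S_mD_m/[S_{m-1}\rho_{n,m}^2CS_m]=D_m/\zeta_\text{intra}^{n,m}$, since $\zeta_\text{intra}^{n,m}=\rho_{n,m}^2CS_{m-1}$.

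Finally I would rewrite $\log_2(D_m/\zeta_\text{intra}^{n,m})=\log_2\big(1+(\zeta_\text{inter}^{n,m}+\zeta_\text{noise}^{n,m})/\zeta_\text{intra}^{n,m}\big)$ and substitute the definitions from Theorem~\ref{theo:2}; cancelling the common factor $CK_{N_\text{BS},1}\kappa_\text{min}^{-1}(\mathbf F)$ and using $N_\text{BS}^{-1}N_\text{U}^{-1}|\beta_{n,m}|^{-2}=C^{-1}$ reproduces the fraction exactly as written. I expect the only conceptual obstacle to be the upper bound on $\overbar R_{n,m}$: Theorem~\ref{theo:1} provides merely a lower bound, so one must fall back on the exact aligned SINR and the vanishing of inter-cluster interference rather than recycle the theorem's inequality; everything afterward is monotone relaxation and algebra. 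Note the bound is vacuous for $m=1$, which is consistent with $\rho_{n,1}=1$, $\overbar{\mathbf h}_{n,1}=\tilde{\mathbf h}_{n,1}$, and $\Delta R_{n,1}=0$ implied by Remark~\ref{remark:2}.
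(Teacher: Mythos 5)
Your proof is correct and reaches exactly the stated bound, but by a genuinely different route than the paper's Appendix~\ref{app:theorem3}. The paper never invokes Theorem~\ref{theo:2} as a packaged result: it starts from the exact rate expression~(\ref{eq6}) for both the aligned and misaligned user, forms the difference of logarithms in~(\ref{eq20}), and then performs a long chain of term-by-term manipulations --- dropping the inter-cluster term from one numerator, exchanging a denominator with a numerator, applying the inequality $\log(u/v)>\log\big((u+c)/(v+c)\big)$ to discard $\sigma^2$, and finally re-invoking the channel model~(\ref{eq191}), Lemma~\ref{lemma:3}, and $(\mathbf{F}^{-1})_{n,n}\le\kappa_\text{min}^{-1}(\mathbf{F})$ in~(\ref{eq21}) --- essentially re-deriving inside the proof the same quantities that Theorem~\ref{theo:2} already packages as $\zeta_\text{intra}^{n,m}$, $\zeta_\text{inter}^{n,m}$, $\zeta_\text{noise}^{n,m}$. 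You instead pair the exact interference-limited upper bound $\overbar R_{n,m}\le\log_2\big(1+P_{n,m}/\sum_{k<m}P_{n,k}\big)$ (valid because, as Appendix~\ref{app:Theorem1} shows, ZFBF kills inter-cluster interference for \emph{every} aligned user, so dropping the positive noise term only enlarges the SINR) with Theorem~\ref{theo:2}'s lower bound, and a single monotone relaxation collapses the gap to $\log_2\big(1+(\zeta_\text{inter}^{n,m}+\zeta_\text{noise}^{n,m})/\zeta_\text{intra}^{n,m}\big)$, which is algebraically identical to the theorem's expression (I verified the cancellation of the common factor $\kappa_\text{min}^{-1}(\mathbf{F})K_{N_\text{BS},1}$ and the substitution $N_\text{BS}^{-1}N_\text{U}^{-1}|\beta_{n,m}|^{-2}$). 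Your route is shorter, modular (any tightening of Theorem~\ref{theo:2} propagates automatically), and makes the structure of the bound transparent: it is precisely the ratio of the bounded inter-cluster-plus-noise terms to the intra-cluster interference. What the paper's longer derivation buys is visibility into where each relaxation enters at the level of exact rates --- e.g., the authors later use the fact that the inter-cluster term was discarded in the second line of~(\ref{eq20}) to explain why the bound loosens for large misalignment in Section~\ref{sec:simulation}. Your closing observation that the bound is vacuous for $m=1$ (consistent with $\rho_{n,1}=1$ and Remark~\ref{remark:2}) applies equally to the paper's statement and is a fair caveat, not a defect of your argument.
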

\begin{proof}
Please see Appendix~\ref{app:theorem3}.
\end{proof}
The upper bound in  Theorem~\ref{theo:3} explicitly shows the effect of the parameters of HB-NOMA system on the rate performance. A low misalignment factor can substantially increase the rate gap.       

\begin{remark}
\normalfont
In Section~\ref{algorithm} the users are assumed to have LoS channels and to be perfectly aligned in a same direction. Particularly, Eq.~(\ref{eq121}) orders the users with respect to the their effective channel. Actually, these effective channels are the strongest path between the BS and users. However, when the users are not aligned in the same direction, the effective channels are not necessarily the strongest. This is because the users have to orient their antenna array response vector toward the beam direction of the first user rather than the best direction. Hence, to properly perform SIC, we revise the ordering considering the misalignment effective channel, i.e.,
\begin{equation}
     \norm[\big]{\tilde{\mathbf{h}}_{n,  1}} \geq \norm[\big]{\tilde{\mathbf{h}}_{n,  2}}\geq \dots
    \geq \norm[\big]{\tilde{\mathbf{h}}_{n,  M_n}},  \quad \text{for} \quad n = 1,   2,   \dots,   N.
\end{equation}
Further, in~(\ref{equPowerAllocation}) the aligned effective channel should be replaced by the misaligned effective channel.  
\end{remark}
\section{Numerical Results}\label{sec:simulation}
In this section we simulate the HB-NOMA system regarding the various design parameters to confirm the analytical derivations in Theorems~\ref{theo:1}-\ref{theo:3}. For simulations, since large scaling fading and path loss put more restriction on mmWave systems, the small scale fading is negligible. The defualt number of antennas $N_\text{BS}$ $N_\text{MU}$ for the BS and all users is assumed 32 and 8, respectively, unless it is mentioned. The misalignment is described as a random variable uniformly distributed by parameter $b$, i.e., $\varphi_{n,1}-\varphi_{n,m} \in [-b, b]$. We first present the results of the HB-NOMA with perfect alignment. Then, the effect of misalignment on the rate performance is shown. Finally, the sum-rate of HB-NOMA with OMA is illustrated.      
\subsection{Perfect Beam Alignment}
Figure~\ref{fig:perfect} studies the performance of the derived bound in Theorem~\ref{theo:1} for aligned users. The users are not affected by the inter-cluster interference from other clusters. It is supposed that the number of users is two and channel gain of the strong and weak user is 0 and -2 dB, respectively. Fig.~\ref{fig:perfect}(a) reveals that the HB-NOMA approximately achieves the rate the same as that of fully-digital beamforming (FD beamforming) for a wide range of SNR. In particular, a small gap between the exact value of HB-NOMA and the lower bound is observed for the strong user (U$_{1,1}$). This is because the complicated expression of the noise term in~(\ref{eq14}) is replaced by a simple but greater term.  For the weak user (U$_{1,2}$) the bound is very tight due to two reasons. First, in the SINR of the weak user, the noise term is dominated by the interference term. Therefore, the effect of noise term is neglected. Second, the interference term is modeled very accurately.\\
Fig.~\ref{fig:perfect}(b) studies the achievable rate for various $N_\text{BS}$. For small $N_\text{BS}$s, the fully-digital outperforms the HB-NOMA. When $N_\text{BS}$ is samll, the RF precoder is not able to steer a highly direct beam toward the users. By increasing $N_\text{BS}$, the beam becomes narrow and the users capture much more power. Again, for the weak user, the lower bound is accurate at all $N_\text{BS}$ regions. For the strong user, the bound does not approach to the exact value but, for $N_\text{BS}>60$, the bound is approximately the same as to the exact HB-NOMA.

 \begin{figure}
     \centering
    \begin{subfigure}[t]{0.45\textwidth}
    \centering
        \raisebox{-\height}{\includegraphics[scale=0.6]{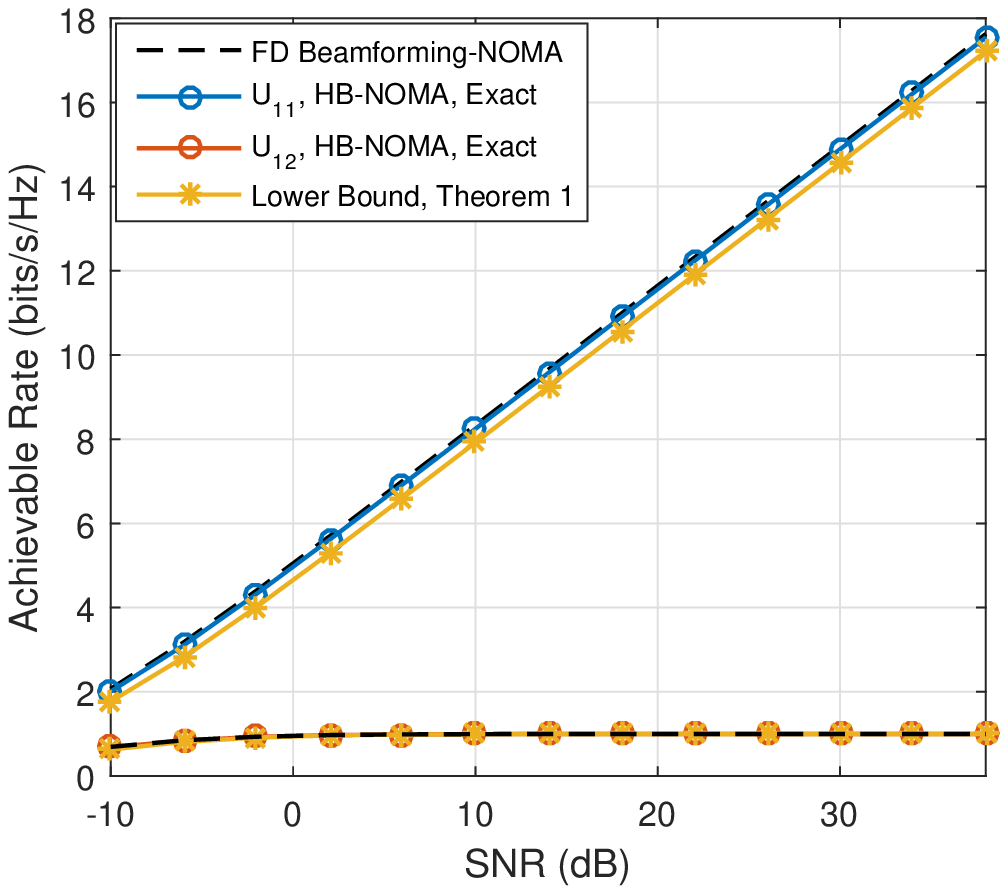}}
        \caption{}
    \end{subfigure}
    \hfill
    \begin{subfigure}[t]{0.45\textwidth}
        \raisebox{-\height}{\includegraphics[scale=0.6]{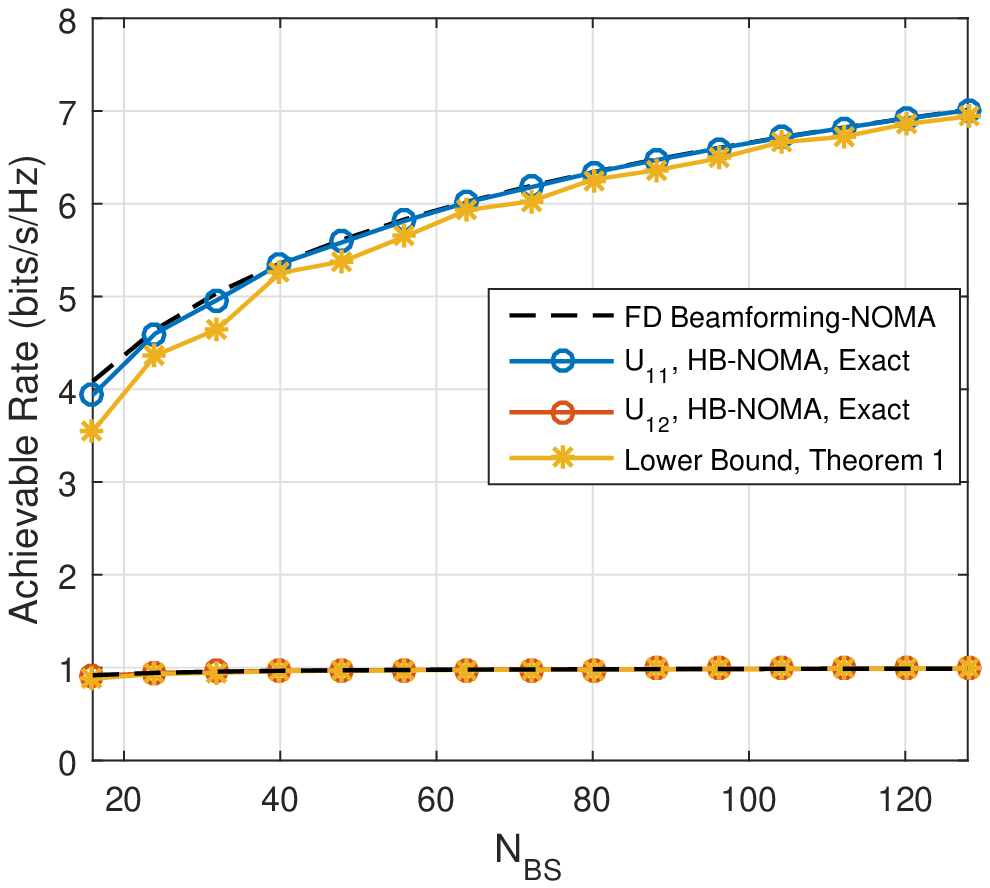}}
         \caption{}
    \end{subfigure}
    \caption{Evaluation of rate performance of the strong channel-based precoder in HB-NOMA with perfect alignment (LoS channels) in terms of (a) SNR and (b) $N_\text{BS}$.}
    \label{fig:perfect}
\end{figure}
 
\subsection{Beam Misalignment}
 
\begin{figure}
     \centering
    \begin{subfigure}[t]{0.45\textwidth}
        \raisebox{-\height}{\includegraphics[scale=0.6]{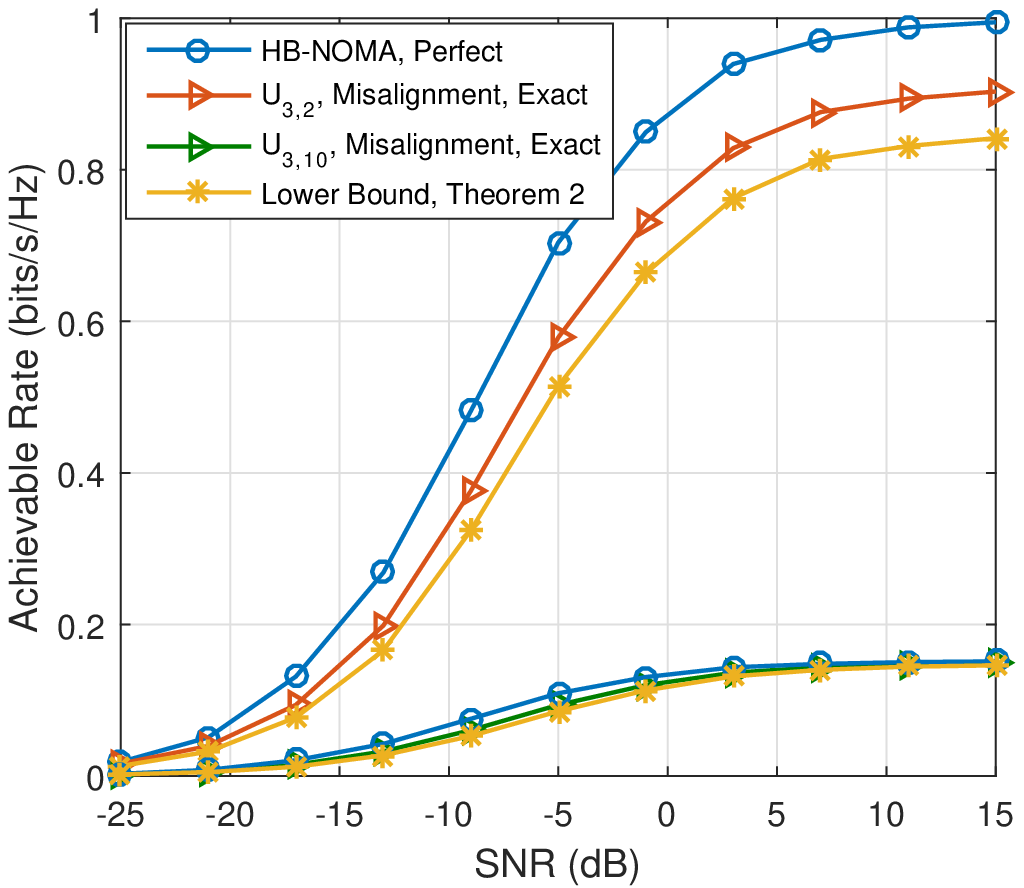}}
        \caption{}
    \end{subfigure}
    \hfill
    \begin{subfigure}[t]{0.45\textwidth}
        \raisebox{-\height}{\includegraphics[scale=0.6]{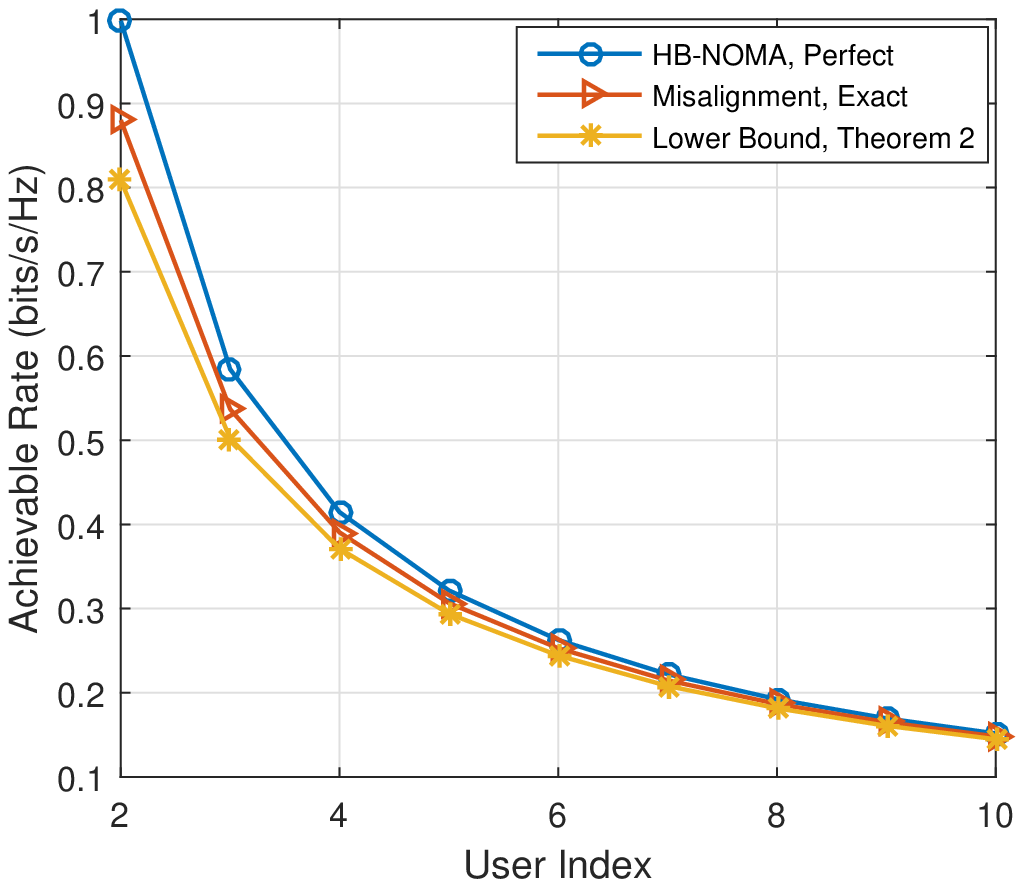}}
         \caption{}
    \end{subfigure}
   
 
   \begin{subfigure}[t]{.45\textwidth}
        \centering
        \raisebox{-\height}{\includegraphics[scale=0.6]{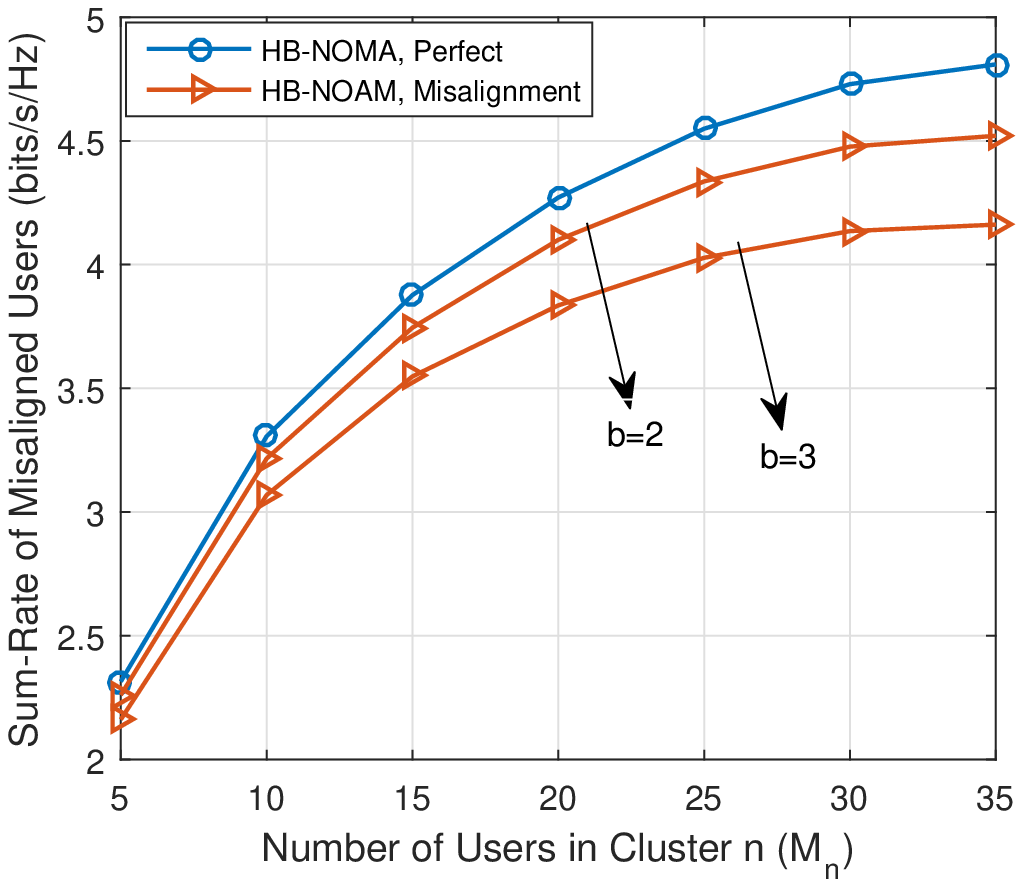}}
        \caption{}
    \end{subfigure}
    \hfill
    \begin{subfigure}[t]{.45\textwidth}
    \centering
    \raisebox{-\height}{\includegraphics[scale=0.6]{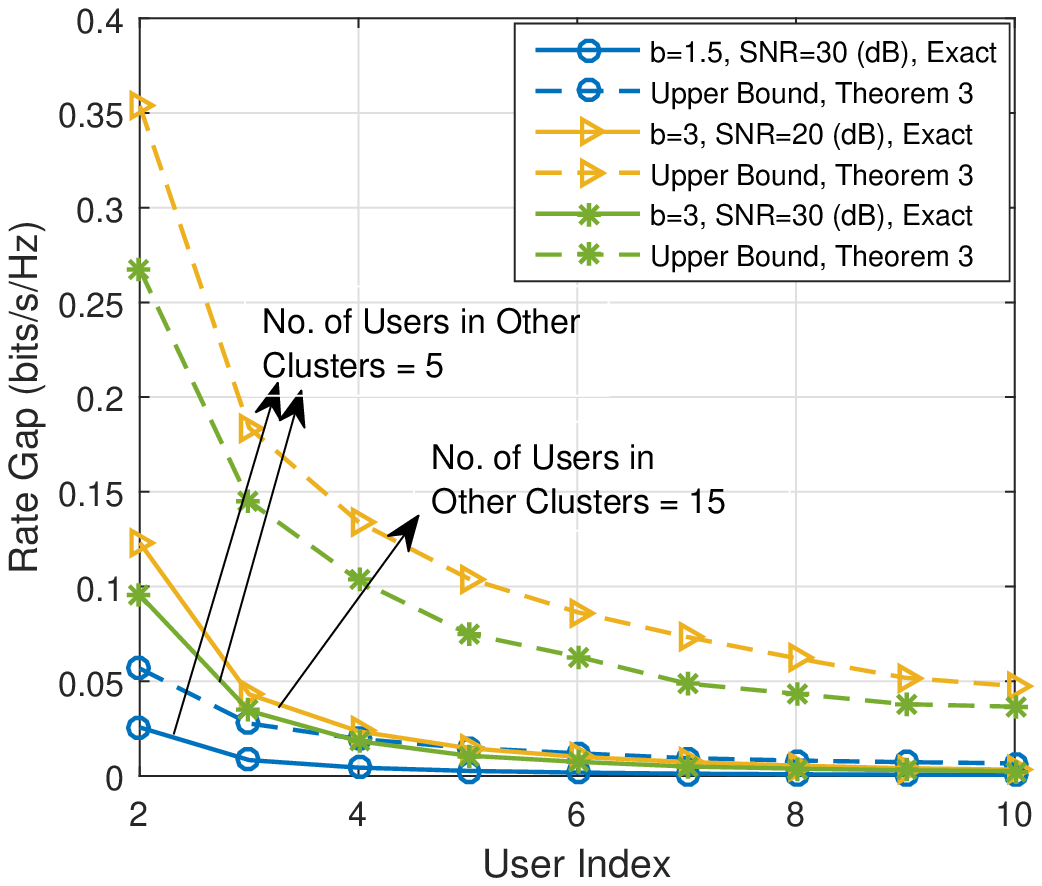}}
    \caption{}
    \end{subfigure}
    \caption{Evaluation of the misalignment on the rate performance of HB-NOMA versus (a) SNR, (b) user index, and (c) number of users per cluster ($M_n$). Also, (d) demonstrates the rate gap among the different misaligned users.}
    \label{fig:imperfect}
\end{figure}

The beam misalignment effect is depicted by Fig.~\ref{fig:imperfect}. We consider five clusters in which $\varphi_{1,1}=10^\circ$, $\varphi_{2,1}=30^\circ$, $\varphi_{3,1}=50^\circ$, $\varphi_{4,1}=65^\circ$, and $\varphi_{5,1}=80^\circ$. All simulations have been done for the middle cluster (third cluster) which is likely imposed the same interference from all the other clusters. Also, the channel gain of the strongest user is 0 dB and the next user's gain drops 1 dB. For instance, the channel gain of U$_{n,m}$ is $-(m-1)$ dB. Fig.~\ref{fig:imperfect}(a), (b), and (d) the number of users in the third cluster is 10.

In Fig.~\ref{fig:imperfect}(a) the achievable rate of two misaligned users U$_{3,2}$ (the strong user) and U$_{3,10}$ (the weak user) versus SNR is shown where the channel gains are -1 and -9 dB, respectively. The misalignment parameter is assumed $b=3$. The number of users in all the other clusters is equal to five. Two different observations are obtained. Increasing the SNR leads to a larger rate gap between perfectly aligned and the misaligned HB-NOMA for the strong user, whereas for the weak users both HB-NOMAs achieve almost the same rate for all SNRs. This demonstrates that the effect of misalignment on the strong users is greater than the weak users. In other words, the weak users should deal with the intra-cluster interference while the strong users should deal with the inter-cluster interference. The other observation is that the lower bound is loose for the strong users but tight for the weak user. The observation indicates that our derived normalized effective channel model in Lemma~\ref{lemma:2} is precise for those users which are intra-cluster interference limited. That is, our finding is able to exactly model the intra-cluster interference. However, the loose lower bound for the strong user indicates that the inter-cluster interference is a little inaccurate which is due to approximating an $N-1$ dimensional subspace with one dimensional space provided in Appendix~\ref{app:lemma2}. \\
To gain more details, we have simulated the achievable rate of all the misaligned users for SNR=15 dB in Fig.~\ref{fig:imperfect}(b). Also, the number of users in the other clusters is set to 15. The mentioned two observations can be seen from this figure, too. However, for strong user, the rate gap between the perfect HB-NOMA and misaligned HB-NOMA is smaller than that of Fig.~\ref{fig:imperfect}(a). Another important observation gained form Fig.~\ref{fig:imperfect}(b) is the impact of the power allocation among the clusters. Based on the proposed power allocation scheme in~(\ref{equPowerAllocation}), to achieve higher rate, more power is assigned to the other clusters than the third cluster which causes U$_{3,2}$ to achieve the rate 0.91 bits/s/Hz. Whereas, for the previous scenario more power is allocated to the third cluster which has more users. Therefore, the rate of U$_{3,2}$ is 0.88 bits/s/Hz. This shows that due to the misalignment the strong clusters leads to higher inter-cluster interference.        

Fig.~\ref{fig:imperfect}(c) compares the sum-rate performance of all the misaligned users with the perfectly aligned HB-NOMA users. Likewise Fig.~\ref{fig:imperfect}(b), we set SNR=15 dB and 15 users for all the clusters except the third. The number of users in the third cluster varies from 5 to 35. Notice that the sum-rate is shown only for the misaligned users, e.g., rate of the first user is neglected. By increasing the number of users, the allocated power to the cluster increases. In consequence, the total rate increases. However, the difference between the aligned and misaligned HB-NOMA becomes worse. Although more users in a cluster means more power is allocated to, the number of users which have inter-cluster interference limited increases as well. As a result, it brings about higher rate lost. Indeed, by making the misalignment parameter worse ($b$=6), the rate lost becomes bigger. It can be concluded that to avoid higher rate lost, HB-NOMA needs to schedule equal number of users per cluster to serve.      

The upper bound evaluation for gap rate between the perfect alignment and misalignment is demonstrated by Fig.~\ref{fig:imperfect}(d). The number of users in other clusters is 5 or 15. For SNR=30 dB and $b$=3, the gap is not substantial and the bound is close to the actual value. When $b$ becomes larger, the gap between the stronger users is bigger than the weaker users. When number of the users of the other cluster increases and simultaneously SNR is reduced, only the stronger users' gap increases. To clarify, for U$_{3,2}$ to U$_{3,5}$, the gap becomes larger, while for the remaining users it is unchanged. The bounds for $b$=6 are not very close to the exact rate gap curves. The main reason is that in the deriving process of the bound in the second line of~(\ref{eq20}) in Appendix~\ref{app:theorem3}, the effect of the inter-cluster interference term is skipped. However, for high misalignment values the interference is considerable. This causes the extracted bound to be less accurate for higher misalignment. 

\begin{figure}[t]   \includegraphics[scale=.6]{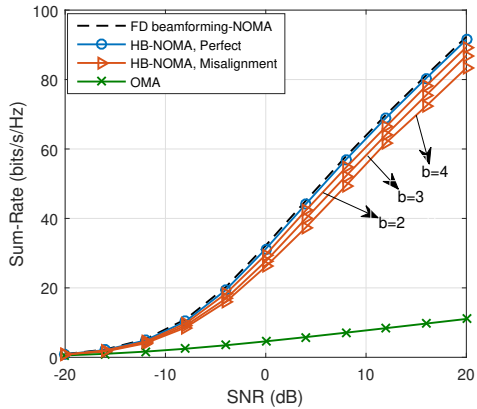}
\centering
        \caption{Sum-rate comparison of the three different systems. The fully-digital and hybrid beamforming systems serve the users using NOMA. The analog system supports the users by exploiting OMA.}
        \label{fig:nomaoma} 
\end{figure}

Our HB-NOMA is compared with the traditional OMA technique in Fig.~\ref{fig:nomaoma}. We choose TDMA for OMA. To gain some insights, three different mmWave systems is evaluated. These systems are fully-digital beamforming, hybrid beamforming and analog beamforming. For fully-digital we assume $N_\text{BS}=N_\text{RF}$=32 which serve 8 clusters. Likewise, for hybrid beamforming we have $N_\text{BS}$=32 but $N_\text{RF}$=8. Both fully-digital and hybrid systems support 8 clusters of users. The first cluster has AoD of $10^\circ$ and AoD of the next clusters increases by $10^\circ$. Further, the users inside of each clusters are distributed in a way that the maximum channel gain difference between the strongest and weakest user is 18 dB. Indeed, the channel gain of the strongest user is 0 dB. The first cluster contains 4 users and each next cluster serves two users more than the previous cluster. Totally, thanks to NOMA technique, both systems support 88 users in each time slot. For OMA, we assume the analog beamforming system equipped with only one RF chain is able to serve one user per time slot. For U$_{n,m}$, the achievable rate of OMA is $\text{log}_2(1+P|\mathbf{w}_{n,m}\mathbf{H}_{n,m}\mathbf{f}_\text{RF}|^2/\sigma^2)$. As expected fully-digital NOMA system achieves the highest sum-rate performance. The HB-NOMA with perfect alignment achieves approximately the same rate as the full-digital. For $b$=2, the misaligned HB-NOMA shows a very close performance to the perfect HB-NOMA. By increasing $b$, the performance slightly decreases. There is a huge rate difference between HB-NOMA and OMA. We conclude that, even in the presence of misalignment, HB-NOMA outperforms OMA.              

 \section{Conclusion}\label{sec:conclusion}
A hybrid beamforming-based NOMA has been designed for the downlink of a single-cell mmWave communication system. To study the achievable rate of an HB-NOMA user, 
we first formulated an optimization problem for the sum-rate of all users in the cell and then proposed an algorithm to solve it in three steps based on the strongest user precoder design. In order to evaluate the sum-rate,   we found  a lower bound for the achievable rate of each user under  perfect and imperfect beam alignment between the effective channel of the users in each cluster. The lower bound analysis demonstrates that perfect HB-NOMA achieves a sum-rate close to that with fully-digital precoder. For the imperfect correlation,   the relationship between the effective channels of the first user and other users inside a cluster was modeled. The bound for the misalignment shows that it is highly function of the mislaigned angle. Such that, a large misalignment angle can cause a significant reduction in the achievable rate. Further,   for each user,   the rate gap between the perfect and imperfect alignment is bounded. The simulation results confirmed our findings. 
\appendices

\section{Proof of Theorem~\ref{theo:1}}\label{app:Theorem1}
\begin{proof}
Given the perfect alignment assumption and (\ref{eq9}),   the effective channel vector for  U$_{n,  m}$ becomes
\begin{align}\label{eq141}
\overbar{\mathbf{h}}_{n,  m}^\dagger &=  \sqrt{N_\text{BS}N_\text{U}}\beta_{n,  m}\mathbf{a}_\text{BS}^\dagger(\varphi_{n,  m})\mathbf{F}_\text{RF} = \beta_{n,  m}\beta_{n,  1}^{-1}\overbar{\mathbf{h}}^\dagger_{n,  1}.
\end{align}
On the other hand,   we have
\begin{equation}\label{eq142}
\overbar{\mathbf{h}}^\dagger_{n,  1}\mathbf{f}^\ell_\text{BB} =
\begin{cases}
\boldsymbol{\Gamma}_{n,  n},   \quad \text{for} \ n,  \ell = 1,   2,   \dots,   N,  \\
0,    \quad \quad \text{ for} \ \ell \neq n.
\end{cases}
\end{equation}
Therefore,   using~(\ref{eq141}) and~(\ref{eq142}) the numerator in~(\ref{eq6}) becomes
\begin{equation}\label{eq143}
    P_{n,  m}\left|\beta_{n,  m}\right|^2\left|\beta_{n,  1}\right|^{-2}\mathbf{\Gamma}_{n,  n}^2.
\end{equation}
Also,   the intra-cluster interference in (\ref{eq61}) becomes $I_\text{intra}^{n,  m} = \sum_{k = 1}^{m-1}P_{n,  k}\left|\beta_{n,  m}\right|^2\left|\beta_{n,  1}\right|^{-2}\mathbf{\Gamma}_{n,  n}^2,$
and the inter-cluster interference term becomes zero,   i.e., $I_\text{inter}^{n,  m} = 0.$

\noindent Now,   substituting~(\ref{eq143}),  and the determined $I_\text{intra}^{n,m}$ and $I_\text{inter}^{n,m}$ in~(\ref{eq6}) gives
\begin{align}\label{eq15}
    \overbar{R}_{n,  m} & = \text{log}_2\left(1 + \frac{P_{n,  m}\left|\beta_{n,  m}\right|^2\left|\beta_{n,  1}\right|^{-2}\mathbf{\Gamma}_{n,  n}^2}{\displaystyle\sum_{k = 1}^{m-1}P_{n,  k}\left|\beta_{n,  m}\right|^2\left|\beta_{n,  1}\right|^{-2}\mathbf{\Gamma}_{n,  n}^2 + \sigma^2}\right) \nonumber \\
    & \overset{(a)}{=} \text{log}_2\left(1 +  \frac{P_{n,  m}N_\text{BS}N_\text{U}\left|\beta_{n,  m}\right|^2}{\displaystyle\sum_{k = 1}^{m-1}P_{n,  k}N_\text{BS}N_\text{U}\left|\beta_{n,  m}\right|^2 + \sigma^2\left(\mathbf{F}^{-1}\right)_{n,  n}}\right)\nonumber \\
    &\overset{(b)}{\geq} \text{log}_2\left(1 +  \frac{P_{n,  m}N_\text{BS}N_\text{U}\left|\beta_{n,  m}\right|^2}{\displaystyle\sum_{k = 1}^{m-1}P_{n,  k}N_\text{BS}N_\text{U}\left|\beta_{n,  m}\right|^2 +\sigma^2 \kappa_\text{min}^{-1}(\mathbf{F})}\right), 
\end{align}
($a$) follows by plugging~(\ref{eq12}) into the expression in the first line of (\ref{eq15}) and using simple manipulations. To get ($b$),   we note  that $\mathbf{F}_\text{RF}$ is full-rank matrix which means $\mathbf{F}=\mathbf{F}_\text{RF}\mathbf{F}_\text{RF}^\dagger$ is positive definite. Then, we have
$\left({{{\mathbf{F}}}}^{-1}\right)_{n,n}\leq\kappa_\text{max}\left({{{\mathbf{F}}}}^{-1}\right)=\kappa_\text{min}^{-1}\left({{{\mathbf{F}}}}\right)$ in which $\kappa_\text{max}(\cdot)$ and $\kappa_\text{min}(\cdot)$ denote the maximum and minimum eigenvalues of $(\cdot)$.  
\end{proof}

\section{Proof of Lemma~1}\label{app:lemma2}
 \begin{proof}
Suppose that the effective channel vectors are fed back by using infinite-resolution codebooks. Also,   let $\hat{\mathbf{h}}_{n,  m}$ denote the normalized effective channel vector for U$_{n,  m}$,   i.e., 
\begin{equation}\label{eq40}
\hat{\tilde{\mathbf{h}}}_{n,  m} = \frac{\tilde{\mathbf{h}}_{n,  m}}{\norm{\tilde{\mathbf{h}}_{n,  m}}}.
\end{equation}

The angle between two complex-valued vectors $\tilde{\mathbf{h}}_{n,m}$ and $ \tilde{\mathbf{h}}_{n,1} \in V_\mathbb{C}$,  denoted by $\Phi_\text{C}$,  is obtained as $
\text{cos}\Phi_\text{C}\overset{\Delta}{=} \rho_{n,m} e^{j\omega_{n,m}} = \hat{\tilde{\mathbf{h}}}_{n,  1}^{\dagger}\hat{\tilde{\mathbf{h}}}_{n,  m},$ where $(\rho_{n,m}\leq 1)$ is equal  to
$\rho_{n,m} = \text{cos}{\Phi}_\text{H}(\hat{\tilde{\mathbf{h}}}_{n,  1},  \hat{\tilde{\mathbf{h}}}_{n,  m}) = \left|\hat{\tilde{\mathbf{h}}}_{n,  1}^{\dagger}\hat{\tilde{\mathbf{h}}}_{n,  m}\right|,$
in which $\Phi_\text{H}(\hat{\tilde{\mathbf{h}}}_{n,  1},  \hat{\tilde{\mathbf{h}}}_{n,  m})$,   $0\leq \Phi_\text{H}\leq\frac{\pi}{2}$,   is the Hermitian angle between two complex-valued vectors $\tilde{\mathbf{h}}_{n,1}$ and $\tilde{\mathbf{h}}_{n,m}$ and $\omega_{n,m}$,   $-\pi\leq\omega_{n,m}\leq\pi$,   is called their pseudo-angle~\cite{scharnhorst2001angles}. The factor $\rho_{n,m}$ describes the angle between the two lines in the complex-valued vector space $V_\mathbb{C}$~\cite{scharnhorst2001angles}.

To ease the analysis, the angle $\omega_{n,m}$ is neglected~\cite{scharnhorst2001angles}. Hence, we find the angle between two lines which are defined by the two vectors $\hat{\tilde{\mathbf{h}}}_{n, 1}$ and $\hat{\tilde{\mathbf{h}}}_{n, m}$. Considering these two vectors as two lines in the space $V_\mathbb{C}$ would be optimistic. However, the simulation results reveal that the derived misalignment model is still effective. Such that, the extracted lower bound for the sum-rate using the misalignment model is close to the exact value of the sum-rate.

For $\ell=n$, the misalignment factor $\rho_{n,m}$ can be calculated as
\begin{align}\label{eqA1}
    \rho_{n,m} \overset{\Delta}{=}  \left|\hat{\tilde{\mathbf{h}}}_{n,  1}^{\dagger}\hat{\tilde{\mathbf{h}}}_{n,  m}\right| &\overset{(a)}{=}
    \frac{N_\text{BS}N_\text{U}\left|\beta_{n,  m}\beta_{n,1}\mathbf{a}_\text{BS}^\dagger(\varphi_{n,  m})\mathbf{F}_\text{RF}\mathbf{F}_\text{RF}^\dagger\mathbf{a}_\text{BS}(\varphi_{n, 1}) \right|}{\norm{\tilde{\mathbf{h}}_{n,  m}}\norm{\tilde{\mathbf{h}}_{n,  1}}} \nonumber \\
    &\overset{(b)}{=}
     \frac{N_\text{BS}N_\text{U}\left|\beta_{n,m}\beta_{n,1}\mathbf{a}_\text{BS}^\dagger(\varphi_{n,  m})\mathbf{V}_1\mathbf{\Lambda}_1\mathbf{V}_1^\dagger\mathbf{a}_\text{BS}(\varphi_{n, 1})\right|}{\norm{\tilde{\mathbf{h}}_{n,  m}}\norm{\tilde{\mathbf{h}}_{n,1}}}\nonumber \\
    &\overset{(c)}{=}
    \frac{\displaystyle\sum_{i=1}^N\kappa_i\left|\mathbf{a}_\text{BS}^\dagger(\varphi_{n,  m})\mathbf{v}_1^i\mathbf{v}_1^{i\dagger}\mathbf{a}_\text{BS}(\varphi_{n, 1})\right|}{\sqrt{\displaystyle\sum_{\ell = 1}^N K_{N_\text{BS}}\left(\varphi_{\ell,  1}-\varphi_{n,m}\right)}\sqrt{\displaystyle\sum_{\ell = 1}^N K_{N_\text{BS}}\left(\varphi_{\ell,  1}-\varphi_{n,1}\right)}}.
\end{align}
To get $(a)$, the expression in~(\ref{eq9}) is used. To get (b), we apply SVD to the Hermitian matrix $\mathbf{F}_\text{RF}\mathbf{F}_\text{RF}^\dagger$ which gives $\mathbf{F}_\text{RF}\mathbf{F}_\text{RF}^\dagger=\mathbf{V}\mathbf{\Lambda}\mathbf{V}^\dagger$ where $\mathbf{V}$ of size $N_\text{BS}\times N_\text{BS}$ is a unitary matrix and $\mathbf{\Lambda}$ of size $N_\text{BS}\times N_\text{BS}$ is a diagonal matrix of singular values ordered in decreasing order. We then partition two matrices $\mathbf{V}$ and $\mathbf{\Lambda}$ as 
\begin{equation}
    \mathbf{V}=\begin{bmatrix}
    \mathbf{V}_1 & \mathbf{V}_2
    \end{bmatrix}
    , \quad \mathbf{\Lambda}=\begin{bmatrix}
    \mathbf{\Lambda}_1 & \mathbf{0} \\
    \mathbf{0} & \mathbf{0}
    \end{bmatrix},
\end{equation}
where $\mathbf{V}_1$ is of size $N_\text{BS}\times N$ and $\mathbf{\Lambda}_1$ and is of size $N\times N$. We note that rank($\mathbf{F}_\text{RF}$)$=N$. Term ($c$) follows from the fact that $\mathbf{\Lambda}_1$ is a diagonal matrix with elements $\kappa_i$ for $i=1, 2, \dots, N$. Notice that $\mathbf{v}_1^i$ represents the $i$th column. 

For $\ell\neq n$, it is reasonable to assume that $\sqrt{1-\rho_{n,m}^2}$ percentage of the amplitude of $\tilde{\mathbf{h}}_{n,m}$ leakages into the subspace generated by the other first users. To determine the subspace, we start with considering  the impact of the misalignment imposed by the other first users on U$_{n,m}$, i.e, $\displaystyle {\sum_{\ell=1,\ell\neq n}^N\left|{\tilde{\mathbf{h}}}_{\ell,  1}^{\dagger}{\tilde{\mathbf{h}}}_{n,  m}\right|^2}$.
Using the definition of vector norm, we rewrite this expression as following: 
\begin{align}
    \sum_{\ell=1,\ell\neq n}^N\left|{\tilde{\mathbf{h}}}_{\ell,  1}^{\dagger}{\tilde{\mathbf{h}}}_{n,  m}\right|^2&=\norm[\Big]{{\tilde{\mathbf{h}}}_{n,  m}^\dagger\begin{bmatrix}{\tilde{\mathbf{h}}}_{1,  1} & \cdots & {\tilde{\mathbf{h}}}_{n-1,  1} & {\tilde{\mathbf{h}}}_{n+1,  1} & \cdots & {\tilde{\mathbf{h}}}_{N,  1}
    \end{bmatrix}}^2\nonumber \\
    &\overset{(a)}{=}{N_\text{BS}N_\text{U}}\norm[\Big]{{\tilde{\mathbf{h}}}_{n,  m}^\dagger\mathbf{F}_\text{RF}^\dagger
    \bigl[\beta_{1,1}\mathbf{a}_\text{BS}\left(\varphi_{1,  1}\right) \text{ } \cdots  \text{ } \beta_{n-1,1}\mathbf{a}_\text{BS}\left(\varphi_{n-1,  1}\right) \nonumber \\  
    & \qquad \qquad \qquad \qquad \qquad  \beta_{n+1,1}\mathbf{a}_\text{BS}\left(\varphi_{n+1,  1}\right) \text{ } \cdots \text{ } \beta_{N,1}\mathbf{a}_\text{BS}\left(\varphi_{N,  1}\right)
    \bigr]}^2\nonumber \\
    &\overset{(b)}{=}{N_\text{BS}N_\text{U}}\norm[\Big]{{\tilde{\mathbf{h}}}_{n,  m}^\dagger\mathbf{F}_\text{RF}^\dagger\mathbf{A}_\text{BS}^{-n}}^2.
\end{align}
To get ($a$), we replace $\tilde{\mathbf{h}}_{\ell,1}$ by~(\ref{eq9}).  Since $\mathbf{a}_\text{BS}\left(\varphi_{n,1}\right)$s are independent vectors, $\mathbf{G}_\text{BS}^{-n}=\sqrt{N_\text{BS}N_\text{U}}\mathbf{F}_\text{RF}^\dagger\mathbf{A}_\text{BS}^{-n}$ determines an $N-1$ dimensional subspace. We represent the weighted linear combination of $\hat{\tilde{\mathbf{h}}}_{\ell,  1}^{\dagger}$ by a new vector $\mathbf{g}_\text{BS}^{-n}$ which is located in the subspace $\mathbf{G}_\text{BS}^{-n}$. So, we get $\mathbf{g}_\text{BS}^{-n}=\sqrt{N_\text{BS}N_\text{U}}\mathbf{F}_\text{RF}\times  \displaystyle\sum_{\ell=1,\ell\neq n}^N\sqrt{P_\ell}\beta_{\ell,1}\mathbf{a}_\text{BS}(\varphi_{\ell,1})$. To get~(\ref{eq19}), we only need to normalize $\mathbf{g}_\text{BS}^{-n}$. 
\end{proof}

\section{Proof of Theorem~2}\label{app:theorem2}
\begin{proof}
Using~(\ref{eq19}),  we obtain the following expressions. First, 
\begin{align}\label{eq191}
\left|\tilde{\mathbf{h}}^{\dagger}_{n,  m}\mathbf{f}_\text{BB}^n\right|^2 &= \rho_{n,  m}^2\norm[\Big]{\tilde{\mathbf{h}}_{n,  m}}^2\left|\hat{\tilde{\mathbf{h}}}^{\dagger}_{n,  1}\mathbf{f}_\text{BB}^n\right|^2  + \left(1 - \rho_{n,  m}^2\right)\norm[\Big]{\tilde{\mathbf{h}}_{n,  m}}^2\left|\mathbf{g}^{-n\dagger}_\text{BS}\mathbf{f}_\text{BB}^n\right|^2 \nonumber \\
& \overset{(a)}{=} \rho_{n,  m}^2\norm[\Big]{\tilde{\mathbf{h}}_{n,  m}}^2\left|\hat{\tilde{\mathbf{h}}}^{\dagger}_{n,  1}\mathbf{f}_\text{BB}^n\right|^2  \overset{(b)}{=} \rho_{n,  m}^2\norm[\Big]{\tilde{\mathbf{h}}_{n,  m}}^2\norm[\Big]{\tilde{\mathbf{h}}_{n,  1}}^{-2}\boldsymbol{\Gamma}_{n,  n}^2, 
\end{align}
in which (a) follows since  $\mathbf{g}^{-n\dagger}_\text{BS}\mathbf{f}_\text{BB}^n = 0$ and (b) follows from~(\ref{eq142}). Second, 
\begin{equation}\label{eq192}
\left|\tilde{\mathbf{h}}^{\dagger}_{n,  m}\mathbf{f}_\text{BB}^\ell\right|^2 =
\left(1-\rho_{n,m}^2\right)\norm[\Big]{\tilde{\mathbf{h}}_{n,  m}}^2\Big|\hat{\mathbf{g}}^{-n\dagger}_\text{BS}\mathbf{f}_\text{BB}^\ell\Big|^2,   \quad \text{for} \quad \ell \neq n.
\end{equation}
Next,   Using~(\ref{eq12}),  ~(\ref{eq142}),  ~(\ref{eq1601}),  ~(\ref{eq40}),   and~(\ref{eq191}),  ~(\ref{eq61}) becomes
\begin{align}\label{eq30}
I_\text{intra}^{n,  m} =& \sum_{k = 1}^{m-1}P_{n,  k}\rho_{n, m}^2N_\text{BS}N_\text{U}\left|\beta_{n,  m}\right|^2\left(\mathbf{F}^{-1}\right)_{n,  n}^{-1}K_{N_\text{BS},  m}K_{N_\text{BS},  1}^{-1}, 
\end{align}
where $K_{N_\text{BS},  1}$ and $K_{N_\text{BS},  m}$ are defined in~(\ref{eq163}).
Likewise,using~(\ref{eq142}),  ~(\ref{eq1601}),  ~(\ref{eq40}),   and~(\ref{eq192}),  ~(\ref{eq62}) becomes
\begin{align}\label{eq31}
I_\text{inter}^{n,  m} = &  \left(1-\rho_{n,m}^2\right)N_\text{BS}N_\text{U}\left|\beta_{n,  m}\right|^2\sum_{\ell \neq n}^NP_\ell\left|\hat{\mathbf{g}}_\text{BS}^{-n\dagger}{\mathbf{f}}_\text{BB}^\ell\right|^2K_{N_\text{BS},  m}.
\end{align}
Further,   after substituting~(\ref{eq191}),  ~(\ref{eq30}) and~(\ref{eq31}) into~(\ref{eq6}),   we get
\begin{align}\label{eq32}
    \tilde R_{n,  m} &= \text{log}_2\left(1+ \frac{\Psi}{I_\text{intra}^{n,  m}+ I_\text{inter}^{n,  m} + \sigma^2}\right)\overset{(a)}{\geq} \text{log}_2\left(1+ \frac{\Psi}{I_\text{intra}^{n,  m}+ \varsigma_\text{inter}^{n,  m} + \sigma^2}\right), 
\end{align}
where $\Psi =P_{n,  m}\rho_{n,m}^2N_\text{BS}N_\text{U}\left|\beta_{n,  m}\right|^2\left(\mathbf{F}^{-1}\right)_{n,  n}^{-1} K_{N_\text{BS},  m}K_{N_\text{BS},  1}^{-1},$
and $\varsigma_\text{inter}^{n,  m} =  \left(1-\rho_{n,m}^2\right)N_\text{BS}N_\text{U}\left|\beta_{n,  m}\right|^2 \times
\kappa_\text{max}(\mathbf{S}) K_{N_\text{BS},  m}$. To get (a),   we have the following lemma.

\begin{lemma}\label{lemma:3}
\normalfont
An upper bound of
$\displaystyle\sum_{\ell=1, \ell \neq n}^NP_\ell\left|\hat{\mathbf{g}}_\text{BS}^{-n\dagger}{\mathbf{f}}_\text{BB}^\ell\right|^2$ is the maximum eigenvalue of $\mathbf{S}$,   i.e.,   $\kappa_\text{max}(\mathbf{S})$.
\end{lemma}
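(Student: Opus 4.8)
The plan is to recognize the sum as a single Hermitian quadratic form in the unit vector $\hat{\mathbf{g}}^{-n}_\text{BS}$ and then invoke the Rayleigh--Ritz bound for the largest eigenvalue. First I would pull each modulus-squared into an outer-product form, using $\left|\hat{\mathbf{g}}^{-n\dagger}_\text{BS}\mathbf{f}_\text{BB}^\ell\right|^2 = \hat{\mathbf{g}}^{-n\dagger}_\text{BS}\mathbf{f}_\text{BB}^\ell\mathbf{f}_\text{BB}^{\ell\dagger}\hat{\mathbf{g}}^{-n}_\text{BS}$, so that linearity of the sum over $\ell\neq n$ gives
\begin{equation}
\sum_{\ell=1,\ell\neq n}^N P_\ell\left|\hat{\mathbf{g}}^{-n\dagger}_\text{BS}\mathbf{f}_\text{BB}^\ell\right|^2 = \hat{\mathbf{g}}^{-n\dagger}_\text{BS}\left(\sum_{\ell=1,\ell\neq n}^N P_\ell\,\mathbf{f}_\text{BB}^\ell\mathbf{f}_\text{BB}^{\ell\dagger}\right)\hat{\mathbf{g}}^{-n}_\text{BS}.
\end{equation}
The key observation is that the parenthesized matrix is exactly $\mathbf{S}$: writing $\mathbf{F}_\text{BB}^{-n,W}$ for the matrix whose columns are the power-weighted precoding vectors $\sqrt{P_\ell}\,\mathbf{f}_\text{BB}^\ell$ for $\ell\neq n$, the product $\mathbf{F}_\text{BB}^{-n,W}\mathbf{F}_\text{BB}^{-n,W\dagger}$ collects these outer products into $\sum_{\ell\neq n}P_\ell\,\mathbf{f}_\text{BB}^\ell\mathbf{f}_\text{BB}^{\ell\dagger}=\mathbf{S}$, matching the weighting used in $\zeta^{n,m}_\text{inter}$ of Theorem~\ref{theo:2}.

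Next I would record the two structural facts that make the bound immediate. Because $\mathbf{S}=\mathbf{F}_\text{BB}^{-n,W}\mathbf{F}_\text{BB}^{-n,W\dagger}$ has the form $\mathbf{A}\mathbf{A}^\dagger$, it is Hermitian and positive semidefinite, so its eigenvalues are real and nonnegative and $\kappa_\text{max}(\mathbf{S})$ is well defined. Moreover, $\hat{\mathbf{g}}^{-n}_\text{BS}$ is a unit vector by construction in Lemma~\ref{lemma:2}, i.e. $\norm{\hat{\mathbf{g}}^{-n}_\text{BS}}=1$. The quadratic form above is therefore a Rayleigh quotient evaluated at a unit argument, and the Rayleigh--Ritz characterization of the largest eigenvalue yields
\begin{equation}
\hat{\mathbf{g}}^{-n\dagger}_\text{BS}\,\mathbf{S}\,\hat{\mathbf{g}}^{-n}_\text{BS} \le \kappa_\text{max}(\mathbf{S})\,\norm{\hat{\mathbf{g}}^{-n}_\text{BS}}^2 = \kappa_\text{max}(\mathbf{S}),
\end{equation}
which is exactly the claimed inequality.

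I expect no genuine obstacle beyond bookkeeping; the only point requiring care is the precise definition of $\mathbf{F}_\text{BB}^{-n,W}$, namely that its $n$th column is deleted and the surviving columns carry the factor $\sqrt{P_\ell}$ rather than $P_\ell$ (so that the outer-product sum reproduces the linear power weight $P_\ell$ and not $P_\ell^2$), with the columns indexed consistently by $\ell\neq n$. Once $\mathbf{S}$ is identified in this way, positive semidefiniteness together with the unit-norm property of $\hat{\mathbf{g}}^{-n}_\text{BS}$ does all the work. This is precisely the inequality invoked at step $(a)$ of~(\ref{eq32}): replacing $\sum_{\ell\neq n}P_\ell\left|\hat{\mathbf{g}}^{-n\dagger}_\text{BS}\mathbf{f}_\text{BB}^\ell\right|^2$ by $\kappa_\text{max}(\mathbf{S})$ enlarges the inter-cluster term $I^{n,m}_\text{inter}$ into $\varsigma^{n,m}_\text{inter}$, thereby lower-bounding $\tilde R_{n,m}$.
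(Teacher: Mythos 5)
Your proof is correct and takes essentially the same route as the paper: the paper rewrites the sum as $\norm[\big]{\hat{\mathbf{g}}_\text{BS}^{-n\dagger}\mathbf{F}_\text{BB}^{-n,W}}_2^2$ and bounds it over unit vectors by the dominant singular value of $\mathbf{F}_\text{BB}^{-n,W}$ (phrased via the maximum-ratio-transmission argument), which is exactly your Rayleigh--Ritz bound on the quadratic form $\hat{\mathbf{g}}_\text{BS}^{-n\dagger}\mathbf{S}\,\hat{\mathbf{g}}_\text{BS}^{-n}$. Your bookkeeping remark that the surviving columns of $\mathbf{F}_\text{BB}^{-n,W}$ must carry the factor $\sqrt{P_\ell}$ (so that $\mathbf{S}=\sum_{\ell\neq n}P_\ell\,\mathbf{f}_\text{BB}^\ell\mathbf{f}_\text{BB}^{\ell\dagger}$ rather than a $P_\ell^2$-weighted sum) is the correct reading of what the bound in Theorem~\ref{theo:2} requires, notwithstanding the paper's looser wording that the columns are ``scaled by $P_\ell$.''
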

\begin{proof}
We rewrite $\displaystyle\sum_{\ell=1, \ell \neq n}^NP_\ell\left|\hat{\mathbf{g}}_\text{BS}^{-n\dagger}{\mathbf{f}}_\text{BB}^\ell\right|^2 = \norm[\big]{\mathbf{g}_\text{BS}^{-n\dagger}\mathbf{F}_\text{BB}^{-n,W}}^2_2$. Maximizing $\norm[\big]{\hat{\mathbf{g}}_\text{BS}^{-n\dagger}\mathbf{F}_\text{BB}^{-n,W}}^2_2$ given $\norm[\big]{\hat{\mathbf{g}}_\text{BS}^{-n}} = 1$ is similar to maximizing a beamforming vector for maximum ratio transmission systems~\cite{love2003grassmannian,  dighe2003analysis}. Hence,   the maximum value of $\hat{\mathbf{g}}_\text{BS}^{-n}$ is the dominant right singular vector of $\mathbf{F}_\text{BB}^{-n,W}$~\cite{love2003grassmannian,  dighe2003analysis}. Thus,   the maximum of $\norm[\big]{\hat{\mathbf{g}}_\text{BS}^{-n\dagger}\mathbf{F}_\text{BB}^{-n,W}}^2_2$ is equal to the maximum eigenvalue of $\mathbf{S}$.
\end{proof}
Lemma~\ref{lemma:3} indicates that $I_\text{inter}^{n,  m} \leq \varsigma_\text{inter}^{n,  m}$. After some manipulations
\begin{align}\label{eq321}
    \tilde R_{n,  m} & {\geq} \text{log}_2\left(1+ \frac{P_{n,  m}\rho_{n,m}^2N_\text{BS}N_\text{U}\left|\beta_{n,  m}\right|^2}{\zeta_\text{intra}^{n,  m}+ \left(\varsigma_\text{inter}^{n,  m}+ \sigma^2\right) \left(\mathbf{F}^{-1}\right)_{n,  n}K^{-1}_{N_\text{BS},  m}K_{N_\text{BS},  1}}\right)\nonumber \\
    &\overset{(a)}{\geq}\text{log}_2\left(1+ \frac{P_{n,  m}\rho_{n,  m}^2N_\text{BS}N_\text{U}\left|\beta_{n,  m}\right|^2}{\zeta_\text{intra}^{n,  m}+ \zeta_\text{inter}^{n,  m} +  \sigma^2\kappa_\text{min}^{-1}(\mathbf{F})K^{-1}_{N_\text{BS},  m}K_{N_\text{BS},  1}}\right), 
\end{align}
where in the first line,   $\zeta_\text{intra}^{n,  m} = \displaystyle\sum_{k=1}^{m-1}P_{n,  k}\rho_{n,m}^2N_\text{BS}N_\text{U}\left|\beta_{n,  m}\right|^2$ and in the second line, $\zeta_\text{inter}^{n,  m} = \left(1-\rho_{n,m}^2\right)\times N_\text{BS} N_\text{U} \left|\beta_{n,m}\right|^2\kappa_\text{max}(\mathbf{S})\kappa_\text{min}^{-1}(\mathbf{F}) K_{N_\text{BS},1}$. To get (a), we note that $\left(\mathbf{F}^{-1}\right)_{n,n} \leq \kappa_\text{min}^{-1}(\mathbf{F})$.
\end{proof}

\section{Proof of Theorem~\ref{theo:3}}\label{app:theorem3}
\begin{proof}
We start with~(\ref{eq6}) to define the achievable rate of U$_{n,  m}$ for the perfect correlation and the imperfect correlation,   i.e.,   $\overbar{R}_{n,  m}$ and $\tilde{R}_{n,  m}$,   respectively. This gives
\begin{align}\label{eq20}
\Delta R_{n,  m} &\overset{\Delta}{=}  \overbar R_{n,  m} -  \tilde R_{n,  m} \nonumber \\
&=\text{log}_2\left(1 + \frac{P_{n,  m}\left|\overbar{\mathbf{h}}^\dagger_{n,  m}\mathbf{f}_\text{BB}^n\right|^2}{\displaystyle \sum_{k = 1}^{m-1}P_{n,  k}\left|\overbar{\mathbf{h}}^\dagger_{n,  m}\mathbf{f}_\text{BB}^n\right|^2+\sigma^2} \right)   - \nonumber \\
& \qquad \qquad \text{log}_2\left(1 + \frac{P_{n,  m}\left|\tilde{\mathbf{h}}_{n,  m}^{\dagger}{\mathbf{f}}_\text{BB}^n\right|^2}{ \displaystyle \sum_{k=1}^{m-1}P_{n,  k}\left|\tilde{\mathbf{h}}^{\dagger}_{n,  m}{\mathbf{f}}_\text{BB}^n\right|^2 + \sum_{\ell=1, \ell\neq n}^NP_\ell\left|\tilde{\mathbf{h}}^{\dagger}_{n,  m}{\mathbf{f}}_\text{BB}^\ell\right|^2 + \sigma^2}\right)
\nonumber \\
& = \text{log}_2\left( \frac{\displaystyle \sum_{k = 1}^{m} P_{n,  k}\left|\overbar{\mathbf{h}}^\dagger_{n,  m}\mathbf{f}_\text{BB}^n\right|^2 + \sigma^2}{\displaystyle \sum_{k = 1}^{m-1}P_{n,  k}\left|\overbar{\mathbf{h}}^\dagger_{n,  m}\mathbf{f}_\text{BB}^n\right|^2+\sigma^2} \right)  - \text{log}_2\left(\frac{\displaystyle \sum_{k=1}^{m} P_{n,  k}\left|\tilde{\mathbf{h}}_{n,  m}^{\dagger}{\mathbf{f}}_\text{BB}^n\right|^2 + \sum_{\ell=1, \ell\neq n}^NP_\ell\left|\tilde{\mathbf{h}}^{\dagger}_{n,  m}{\mathbf{f}}_\text{BB}^\ell\right|^2 + \sigma^2}{ \displaystyle \sum_{k=1}^{m-1}P_{n,  k}\left|\tilde{\mathbf{h}}^{\dagger}_{n,  m}{\mathbf{f}}_\text{BB}^n\right|^2 + \sum_{\ell=1, \ell\neq n}^NP_\ell\left|\tilde{\mathbf{h}}^{\dagger}_{n,  m}{\mathbf{f}}_\text{BB}^\ell\right|^2 + \sigma^2}\right) \nonumber \\
& \overset{(a)}{\leq} \text{log}_2\left( \frac{\displaystyle \sum_{k = 1}^{m} P_{n,  k}\left|\overbar{\mathbf{h}}^\dagger_{n,  m}\mathbf{f}_\text{BB}^n\right|^2 + \sigma^2}{\displaystyle \sum_{k=1}^{m} P_{n,  k}\left|\tilde{\mathbf{h}}_{n,  m}^{\dagger}{\mathbf{f}}_\text{BB}^n\right|^2 + \sigma^2} \right)   - \text{log}_2\left(\frac{\displaystyle \sum_{k = 1}^{m-1}P_{n,  k}\left|\overbar{\mathbf{h}}^\dagger_{n,  m}\mathbf{f}_\text{BB}^n\right|^2+\sigma^2}{ \displaystyle \sum_{k=1}^{m-1}P_{n,  k}\left|\tilde{\mathbf{h}}^{\dagger}_{n,  m}{\mathbf{f}}_\text{BB}^n\right|^2 + \sum_{\ell=1,\ell\neq n}^NP_\ell\left|\tilde{\mathbf{h}}^{\dagger}_{n,  m}{\mathbf{f}}_\text{BB}^\ell\right|^2 + \sigma^2}\right) \nonumber \\
& \overset{(b)}{\leq} \text{log}_2\left( \frac{\norm[\big]{\overbar{\mathbf{h}}_{n,  m}}^2\left|\hat{\overbar{\mathbf{h}}}^\dagger_{n,  m}\mathbf{f}_\text{BB}^n\right|^2}{\norm[\big]{\tilde{\mathbf{h}}_{n,  m}}^2\left|\hat{\tilde{\mathbf{h}}}_{n,  m}^{\dagger}{\mathbf{f}}_\text{BB}^n\right|^2} \right)  - \text{log}_2\left(\frac{\norm[\big]{\overbar{\mathbf{h}}_{n,  m}}^2\displaystyle \sum_{k = 1}^{m-1}P_{n,  k}\left|\hat{\overbar{\mathbf{h}}}^\dagger_{n,  m}\mathbf{f}_\text{BB}^n\right|^2+1}{\Upsilon}\right),
\end{align}
where $\Upsilon = \norm[\big]{\tilde{\mathbf{h}}_{n,  m}}^2 \displaystyle \sum_{k=1}^{m-1}P_{n,  k}\left|\hat{\tilde{\mathbf{h}}}^{\dagger}_{n,  m}{\mathbf{f}}_\text{BB}^n\right|^2 + \norm[\big]{\tilde{\mathbf{h}}_{n,  m}}^2\sum_{\ell=1, \ell\neq n}^NP_\ell\left|\hat{\tilde{\mathbf{h}}}^{\dagger}_{n,  m}{\mathbf{f}}_\text{BB}^\ell\right|^2 + \sigma^2$.
To get (a) we remove positive quantity $\displaystyle \sum_{\ell=1, \ell\neq n}^NP_\ell\left|\tilde{\mathbf{h}}^{\dagger}_{n,  m}{\mathbf{f}}_\text{BB}^\ell\right|^2$ from the second term. Then,   we exchange the denominator of the first term with the numerator of the second one. (b) follows from the fact that for $u > v$,   it gives $\text{log}\left(\frac{u}{v}\right) > \text{log}\left(\frac{u+c}{v+c}\right)$ ($c>0$),   and  applying the normalized vector $\tilde{\mathbf{h}}_{n,  m}$ defined in~(\ref{eq40}) for both perfect and imperfect effective channel vectors.

Noting that $\hat{\overbar{\mathbf{h}}}_{n,  1} = \hat{\overbar{\mathbf{h}}}_{n,  m}$ and using~(\ref{eq191}) it yields
\begin{align}\label{eq21}
\Delta R
& \leq \text{log}_2\left(\frac{\norm[\big]{\overbar{\mathbf{h}}_{n,  m}}^2}{\rho_{n,  m}^2\norm[\big]{\tilde{\mathbf{h}}_{n,  m}}^2}\right)   - \text{log}_2\left( \displaystyle   \sum_{k=1}^{m-1}P_{n,  k} \norm[\big]{\overbar{\mathbf{h}}_{n,  m}}^2\left|\hat{\tilde{\mathbf{h}}}_{n,  1}^\dagger\mathbf{f}_\text{BB}^n\right|^2+\sigma^2\right) \nonumber \\
& \quad + \text{log}_2\Bigg(\sum_{k = 1}^{m-1}P_{n,  k}\rho_{n,  m}^2\norm[\big]{\tilde{\mathbf{h}}_{n,     m}}^2\left|\hat{\tilde{\mathbf{h}}}_{n,  1}^\dagger{\mathbf{f}}_\text{BB}^n\right|^2 + (1 - \rho_{n,  m}^2)\norm[\big]{\tilde{\mathbf{h}}_{n,  m}}^2\sum_{\ell=1, \ell\neq n}^NP_\ell\left|\hat{\mathbf{g}}^{-n\dagger}_\text{BS}{\mathbf{f}}_\text{BB}^\ell\right|^2 + \sigma^2 \Bigg) \nonumber \\
& \overset{(a)}{=} - \text{log}_2\left( \displaystyle   \sum_{k=1}^{m-1}P_{n,  k}\rho_{n,  m}^2 \left|\hat{\tilde{\mathbf{h}}}_{n,  1}^\dagger\mathbf{f}_\text{BB}^n\right|^2\right)\nonumber \\
&\quad + \text{log}_2\Bigg(\sum_{k = 1}^{m-1}P_{n,  k}\rho_{n,  m}^2\left|\hat{\tilde{\mathbf{h}}}_{n,  1}^\dagger{\mathbf{f}}_\text{BB}^n\right|^2  + (1 - \rho_{n,  m}^2)\sum_{\ell=1, \ell\neq n}^NP_\ell\left|\hat{\mathbf{g}}^{-n\dagger}_\text{BS}{\mathbf{f}}_\text{BB}^\ell\right|^2 + \frac{\sigma^2}{\norm[\big]{\tilde{\mathbf{h}}_{n,  m}}^{2}} \Bigg) \nonumber \\
& \overset{(b)}{\leq} \text{log}_2\left(1 + \frac{\displaystyle \left(1-\rho_{n,  m}^2\right)\kappa_\text{max}\left(\mathbf{S}\right)+\sigma^2\norm[\big]{\tilde{\mathbf{h}}_{n,  m}}^{-2}}{\rho_{n,  m}^2 K^{-1}_{N_\text{BS},1}\left(\mathbf{F}^{-1}\right)^{-1}_{n,n}\displaystyle   \sum_{k=1}^{m-1}P_{n,  k}}\right)\nonumber \\
& \overset{(c)}{\leq} \text{log}_2\left(1 + \frac{\displaystyle \left(1-\rho_{n,  m}^2\right)\kappa_\text{max}\left(\mathbf{S}\right)+\sigma^{2}K^{-1}_{N_\text{BS},m}N_\text{BS}^{-1}N_\text{U}^{-1}\left|\beta_{n,m}\right|^{-2}}{\rho_{n,  m}^2 K^{-1}_{N_\text{BS},1}\kappa_\text{min}(\mathbf{F})\displaystyle   \sum_{k=1}^{m-1}P_{n,  k}}\right),
\end{align}
in which (a) follows by rewriting the first term as $\text{log}_2\left(\rho_{n,  m}^{-2}\norm[\big]{\overbar{\mathbf{h}}_{n,  m}}^2\right) - \text{log}_2\left(\norm[\big]{\tilde{\mathbf{h}}_{n,  m}}^2\right)$. Then,   we sum up the expression $\text{log}_2\left(\rho_{n,  m}^{-2}\norm[\big]{\overbar{\mathbf{h}}_{n,  m}}^2\right)$ with the second term and the expression $-\text{log}_2\left(\norm[\big]{\tilde{\mathbf{h}}_{n,  m}}^2\right)$ with the third term. To get (b), we again sum up the first term with the second term. We then use Lemma~\ref{lemma:3} to get $\kappa_\text{max}(\mathbf{S})$ and~(\ref{eq142}) and~(\ref{eq12}) to get $K^{-1}_{N_\text{BS},1}\left(\mathbf{F}^{-1}\right)^{-1}_{n,n}$. To obtain (c), first we use $\norm[\big]{\tilde{\mathbf{h}}_{n,m}}^2=K_{N_\text{BS},m}N_\text{BS}N_\text{U}|\beta_{n,m}|^2$. Next we use the inequality $\left(\mathbf{F}^{-1}\right)_{n,n} \leq \kappa_\text{min}^{-1}(\mathbf{F})$. 
\end{proof}




\ifCLASSOPTIONcaptionsoff
  \newpage
\fi

\bibliographystyle{IEEEtran}
\bibliography{IEEEabrv,references}
\end{document}